\newenvironment{breakablealgorithm}
  {% \begin{breakablealgorithm}
   \begin{center}
     \refstepcounter{algorithm}% New algorithm
     \hrule height.8pt depth0pt \kern2pt% \@fs@pre for \@fs@ruled
     \renewcommand{\caption}[2][\relax]{% Make a new \caption
       {\raggedright\textbf{\fname@algorithm~\thealgorithm} ##2\par}%
       \ifx\relax##1\relax % #1 is \relax
         \addcontentsline{loa}{algorithm}{\protect\numberline{\thealgorithm}##2}%
       \else % #1 is not \relax
         \addcontentsline{loa}{algorithm}{\protect\numberline{\thealgorithm}##1}%
       \fi
       \kern2pt\hrule\kern2pt
     }
  }{% \end{breakablealgorithm}
     \kern2pt\hrule\relax% \@fs@post for \@fs@ruled
   \end{center}
  }
\newtheorem{thm}{Theorem}[section]
\title[Testing similarity of parametric competing risks models]{Testing similarity of parametric competing risks models for identifying potentially similar pathways in healthcare}
\author{Kathrin M\"ollenhoff}
\address{Institute of Medical Statistics and Computational Biology, Faculty of Medicine, University of Cologne, Germany; \\correspondence to kathrin.moellenhoff@uni-koeln.de}
\author{Nadine Binder}
\address{Institute of General Practice/Family Medicine, Medical Center and Faculty of Medicine, University of Freiburg, Germany}
\author{Holger Dette}
\address{Department of Mathematics, Ruhr-Universität Bochum, Germany}
\date{\today}                                    
\begin{document}

\begin{abstract}
The identification of similar patient pathways is a crucial task in healthcare analytics. A flexible  tool to address this issue are parametric competing risks models, where transition intensities may be specified by a variety of parametric distributions, thus in particular being possibly time-dependent. 
We assess the similarity between two such models by examining the transitions between different health states. This research introduces a method to measure the maximum differences in transition intensities over time, leading to the development of a test procedure for assessing similarity. We propose a parametric bootstrap approach for this purpose and provide a proof to confirm the validity of this procedure. The performance of our proposed method is evaluated through a simulation study, considering a range of sample sizes, differing amounts of censoring, and various thresholds for similarity. Finally, we demonstrate the practical application of our approach with a case study from urological clinical routine practice, which inspired this research.
\end{abstract}

\maketitle

%% Introduction 
\section{Introduction}\label{intro}

In the evolving landscape of healthcare analytics, the quest to identify similar patient pathways through various treatments and diagnostics is crucial. This paper focuses on a pivotal aspect of healthcare research: the utilization of flexible parametric competing risks models to test for similar treatment pathways across different patient populations.

Competing risks models, a special case of multi-state models (see, e.g., \citet{andersen_competing_2002}, \citet{andersen_multi-state_2002}), offer a sophisticated means to dissect and understand the intricacies of patient healthcare journeys. These models not only track transitions between different health states but also allow for a nuanced analysis of whether different treatment steps still lead to similar subsequent transitions. This research seeks to leverage these models to test for similarities in healthcare pathways, with the overarching goal of enhancing clinical decision-making. In this regard, we are particularly interested in deciding whether two competing risks models can be assumed to be \textit{similar}, or, in other words, \textit{equivalent}. Once similarity has been established, clinical decision making can profit a lot of this knowledge. For example, from the perspective of a clinical practitioner, it is important to know with regard to further decisions, whether healthcare pathways following two different initial treatments are similar or not, taking all possibly occurring events into account.  Moreover, if two pathways turn out to be equivalent, data can be pooled for further common and hence more reliable analyses.

Assessing similarity of competing risks in multi-state models has rarely been addressed in the literature to date. For the simplest case, the classical two-state survival model, several methods are available. The traditional approach of an equivalence test in this scenario is based on an extension of a log-rank test and assumes a constant hazard ratio between the two groups \cite{wellek1993}. However, this assumption, which is rarely assessed and often violated in practice as indicated by crossing survival curves \citep{li2015,jachno2019}, has been generally criticized \citep{hernan2010,uno2014}. As an alternative, \citet{com1993} introduce a non-parametric method, based on the difference of the survival functions and without assuming proportional hazards. In addition, a parametric alternative has recently been proposed by \citet{moellenhoff2023}, who consider a similar test statistic, but assume parametric distributions for the survival and the censoring times, respectively.

%However, none of these approaches has been extended to competing risks models yet. 
Recently, \citet{binder2022similarity} extend the considerations on similarity testing to competing risks models by introducing a parametric approach based on a bootstrap technique introduced earlier \cite{detmolvolbre2015}. They propose performing individual tests for each transition and conclude equivalence for the whole competing risks model if all individual null hypotheses can be rejected, according to the intersection union principle (IUP) \cite{berger1982}. Their approach, while effective, has some areas for improvement. First, with an increasing number of states the power decreases substantially, as the IUP is rather conservative \cite{phillips1990}.
Secondly, their approach builds on the assumption of constant transition intensities, i.e., exponentially distributed transition times, which can sometimes be to simplistic (as discussed in, e.g., works by \citet{hill2021} and \citet{von2017}). Therefore, exploring more flexible methods will typically offer a more fitting model for the underlying data.
%Second, the major restriction of this method is the assumption of constant transition intensities, i.e. exponentially distributed transition times. However, this assumption often oversimplifies the problem at hand (see, e.g., \citet{hill2021} and \citet{von2017} for a discussion on that) and not always provides an adequate model for the underlying data. Consequently, more complex methods are required.

The method presented in this paper improves both of these aspects. First, it allows for any parametric model, meaning in particular time-dependent transition intensities, and these parametric distributions can vary across transitions, resulting in a very flexible modeling framework. 
Second, we propose another test statistic, which results in one global test instead of combining individual tests for each state and thus results in higher power. The paper is structured as follows.  In Section 2, we define the modelling setting, outline the algorithmic procedure for testing the global hypotheses, and provide a corresponding proof of the new test procedure. In Section 3 we demonstrate the validity of the new approach and compare its performance to the previous method \cite{binder2022similarity}. Finally, in Section 4, we explain the application example which inspired this research. Thereby we particularly highlight the need to consider flexible parametric models whose specific estimators motivate further evaluations of the new method. Finally, we close with a discussion.

%% Methods
\section{Methods} \label{msm}

%% Multi-state model
\subsection{Competing risk models and parameter estimation} \label{crm}

Following \citet{andersen_statistical_1993}, we consider two independent Markov processes \begin{equation}
\label{model}
(X^{(\ell )}(t))_{t \geq 0} \quad\quad (\ell  =1,2)
\end{equation}
with state spaces $\{0, 1, \ldots, k\}$ to model the event histories as competing risks for samples of two different populations $\ell = 1,2$. 
The processes have possible transitions from state $0$
to state $j \in \{1, \ldots, k\}$ with transition probabilities 

\begin{equation}
\mathbb{P}^{(\ell  )}_{0j}(0,t) = \mathbb{P}(X^{(\ell  )}(t) = j | X^{(\ell  )}(0) = 0).
\end{equation}

Every individual starts in state $0$ at time $0$, i.e. $P(X(0)=0)=1$. The time-to-first-event is defined as stopping time $T = \inf\{t > 0 \mid X(t) \neq 0\}$ and the type of the first event is $X(T) \in {1,\ldots, k}$.  
The event times can possibly be right-censored, so that only the censoring time is known, but no transition to another state could be observed. In general, we assume that censoring times $C$ are independent of the event times $T$.
%Therefore, the observed data is given by $(\min(T,C),I\{T \leq C\}\cdot X(T))$, where $I$ denotes the indicator function.
Let

\begin{equation} \label{intensities}
\alpha_{0j}^{(\ell  )}(t) = \lim_{\Delta t \rightarrow 0} \frac{\mathbb{P}^{(\ell  )}_{0j}(t,t + \Delta t)}{\Delta t} %\quad (j=1, \ldots , k)
\end{equation}
denote the cause-specific transition intensity from state $0$ to state $j$ % \in \{1, . . . , k\}$ 
for the  $\ell$th model. %,$\ell   \in \{1, 2\}$. 
The transition intensities, also known as cause-specific hazards, completely determine the stochastic behavior of the process.
%Specifically, $\mathbb{P}^{(\ell  )}_{00}(0,t)=\exp\big(-\sum_{j=1}^k \int_0^t  \alpha^{(\ell  )}_{0j}(u)du\big)$ denotes the marginal survival probability $\mathbb{P}^{(\ell  )}_{00}(0,t)=S^{(\ell  )}(t)=\mathbb{P}(T^{(\ell  )}\geq t)$, that is the probability of not experiencing any of the $k$ events prior to time point $t$.
Specifically, $\mathbb{P}^{(\ell  )}_{00}(0,t)=\exp\big(-\sum_{j=1}^k \int_0^t  \alpha^{(\ell  )}_{0j}(u)du\big)=\mathbb{P}(T^{(\ell  )}\geq t)=S^{(\ell  )}(t)$ denotes the marginal survival probability, that is the probability of not experiencing any of the $k$ events prior to time point $t$.

We here consider parametric models for the intensities, that is $\alpha_{0j}^{(\ell  )}(t)=\alpha_{0j}^{(\ell  )}(t, \theta^{(\ell)}_{0j})$,
where 
%\HD{Wir könnten eine von $j$ abhängige Dimension $p_j$ wählen!}
\begin{align}\label{def:theta}
\theta^{(\ell)}_{0j}=(\theta^{(\ell)}_{0j1},\ldots,\theta^{(\ell)}_{0jp_{j}})^\top %,\ \ell=1,2,\ j=1,\ldots k,
\end{align}
denotes a  $p_j$-dimensional parameter vector specifying the underlying distribution. Typical examples of parametric event-time models are given by the exponential, the Weibull, the Gompertz  or the log-normal distribution, just to mention a few (see e.g. \citet{kalbfleisch2011statistical}). Except for the exponential distribution, the intensities vary over time which makes the estimation procedure more complex compared to the situation of constant intensities.
 
For deriving the likelihood function to obtain estimates $\hat\theta^{(\ell)}_{0j}$ of the parameters  in \eqref{def:theta}, %and the corresponding curves $\hat \alpha_{0j}^{(\ell  )}(t)=\alpha_{0j}^{(\ell  )}(t, \hat\theta^{(\ell)}_{0j})$, $j=1,\ldots,k$, $\ell=1,2$. 
we consider possibly right-censored event times of individuals and assume that two independent samples $X^{(1)}_1, \ldots, X^{(1)}_{n_1} $ and  $X^{(2)}_1, \ldots, X^{(2)}_{n_2}$ from  Markov processes \eqref{model}
are observed over the interval $\mathcal{T}=[0,\tau]$, each containing the state 
and transition time 
%$ X_i^{(\ell)}$ 
(or the censoring time, respectively) of an individual $i$ in group $\ell$. 
Thus we observe $X_i^{(\ell  )}=(\tilde{T}_i^{(\ell)},X^{(\ell)}(\tilde{T}_i^{(\ell)}))$, where $ \tilde{T}_i^{(\ell)}=\min(T_i^{(\ell)},C_i^{(\ell)})$, %$\ell=1,2$, 
$i=1,\ldots,n_\ell$. The total number of individuals is given by $n:=n_1+n_2$.

Following \citet{andersen2002competing}, in case of type I censoring, i.e. a fixed end of the study given by $\tau$, each individual $i$ contributes a factor to the likelihood function given by $S(C_i)$, %if it is censored at censoring time $C_i$, 
whereas if there was a transition to state $j$ at time $T_i$ the factor would be $S(T_i)\alpha_{0j}(T_i,\theta_{0j})$ (group index $\ell$ omitted here).
Consequently the corresponding likelihood function in the $\ell$th group, based on $n_\ell$ independent observations, is given by the product
\begin{equation}\label{likelihood}
\mathcal{L}_\ell (\theta^{(\ell)})=\prod_{i=1}^{n_\ell} S^{(\ell  )}(\tilde{T}_i^{(\ell)}) \prod_{j=1}^k \alpha^{(\ell)}_{0j}(\tilde{T}_i^{(\ell)},\theta^{(\ell)}_{0j})^{I\{X^{(\ell)}(\tilde{T}_i^{(\ell)})=j\}},
\end{equation}
where 
\begin{equation}
\label{hol2}
\theta^{(\ell)}=((\theta^{(\ell)}_{01})^\top,\ldots,(\theta^{(\ell)}_{0k})^\top)^\top
\end{equation}
is the $p := \sum_{j=1}^k p_j$-dimensional parameter  vector specifying the underlying distributions and hence the transition intensities $\alpha^{(\ell)}_{0j}(t)$. %, $j=1,\ldots,k$, $\ell=1,2$.
%The survival function $S^{(\ell)}(t)$, describing the probability that an individual is still in state $0$ at time point $t$, is given by 
%$$ S^{(\ell  )}(t) = \mathbb{P}(T^{(\ell  )}\geq t)= \exp{\Big(-\int_0^t \sum_{j=1}^k \alpha_{0j}^{(\ell  )}(u) du\Big)},\ \ell=1,2.$$
As $\tilde{T}_i^{(\ell)}=T_i^{(\ell)}$, if individual $i$ had a transition to any of the $k$ states, we get, taking the logarithm of \eqref{likelihood},

\begin{equation}\label{loglikelihood}
\log \mathcal{L}_\ell (\theta^{(\ell)})=\sum_{i=1}^{n_\ell} \log(S^{(\ell  )}(\tilde{T}_i^{(\ell)}))+ \sum_{i=1}^{n_\ell}\sum_{j=1}^k I\{X^{(\ell)}(T_i^{(\ell)})=j\}\log(\alpha^{(\ell)}_{0j}(T_i^{(\ell)},\theta^{(\ell)}_{0j})).
\end{equation}
By maximizing the functions $\log \mathcal{L}_1$ and $\log \mathcal{L}_2$ in \eqref{loglikelihood} we obtain ML estimates  $\hat\theta^{(1)}$ and $\hat\theta^{(2)}$, respectively.%, determining the cause-specific transition intensities and hence the underlying distributions of the competing risks models.

In case of random right-censoring, we assume that the censoring times $C$ %are independent from the event times and (steht oben schon und independence gilt auch bei administrative Zensierung)
follow a particular distribution with density $g=g(t,\psi)$ and distribution function $G=G(t,\psi)$, where $\psi$ denotes the parameter specifying the censoring distribution. Technically, assuming random right-censoring is incorporated in the likelihood as adding an additional state to the model. Precisely, if an individual $i$ is censored at censoring time $C_i$, the contribution to the likelihood is given by $\mathbb{P}(\tilde{T}_i=C_i,X(\tilde{T}_i)=0)=\mathbb{P}(\tilde{T}_i=C_i,T_i>C_i)=S(C_i)\cdot g(C_i)$ and thus the likelihood in \eqref{likelihood} is extended by an additional factor and, in group $\ell$, becomes 

\begin{equation}\label{likelihood_cens}
\mathcal{L}_\ell (\theta^{(\ell)},\psi^{(\ell)})=\prod_{i=1}^{n_\ell} S^{(\ell  )}(\tilde{T}_i^{(\ell)}) g^{(\ell)}(\tilde{T}_i^{(\ell)},\psi^{(\ell)})^{I\{X^{(\ell)}(\tilde{T}_i^{(\ell)})=0\}} \prod_{j=1}^k \alpha^{(\ell)}_{0j}(\tilde{T}_i^{(\ell)},\theta^{(\ell)}_{0j})^{I\{X^{(\ell)}(\tilde{T}_i^{(\ell)})=j\}},
\end{equation}

and, accordingly, the log-likelihood in \eqref{loglikelihood} becomes

\begin{align}\label{loglikelihood_cens}
\log \mathcal{L}_\ell (\theta^{(\ell)},\psi^{(\ell)})&=\sum_{i=1}^{n_\ell} \log(S^{(\ell  )}(\tilde{T}_i^{(\ell)}))+ \sum_{i=1}^{n_\ell}  I\{X^{(\ell)}(\tilde{T}_i^{(\ell)})=0\}\log g^{(\ell)}(\tilde{T}_i^{(\ell)},\psi^{(\ell)}) \nonumber\\
&+\sum_{i=1}^{n_\ell}\sum_{j=1}^k I\{X^{(\ell)}(T_i^{(\ell)})=j\}\log(\alpha^{(\ell)}_{0j}(T_i^{(\ell)},\theta^{(\ell)}_{0j})).
\end{align}

\bigskip

%% Similarity algorithm 
\subsection{Similarity of competing risk models} \label{simapp}

%\KM{Ich würde für dieses Kapitel die Abh. von den thetas rauslassen, s.unten. Sonst wird es wirklich unangenehm zu lesen... fürs schätzen ist es ja schon wichtig und das Theorem später, aber hier fände ich es so schöner.} \HD{\bf das ist eine gute Idee!} 

%We are interested in similarity between the two competing risk models. 
An intuitive way to define  similar competing risk models is by measuring the maximum distance between transition intensities and decide for similarity if this distance is small. Note that, due to an easier readability, we omit the dependency of the intensities $\alpha^{(\ell)}_{0j}$ on  the parameters $\theta^{(\ell)}_{0j}$, $j=1,\ldots k$, throughout the following discussion. Therefore the hypotheses are given by
\begin{equation}\label{h2}
H_0 : \mbox{ there exists an index } j \in \{ 1,\ldots,k \} \mbox{ such that } \| \alpha^{(1)}_{0j} - \alpha^{(2)}_{0j} \|_\infty \geq \Delta
\end{equation}
versus
\begin{equation}\label{h3}
H_1 : \mbox{ for all } j \in \{ 1,\ldots,k \} \qquad  \| \alpha^{(1)}_{0j} - \alpha^{(2)}_{0j} \|_\infty < \Delta,
\end{equation}
where $\Delta$ is a pre-specified threshold and $\| f-g\|_\infty = \sup_{t \in \mathcal{T}} \mid f(t) - g(t) \mid$ denotes the maximal deviation
between the functions $f$ and $g$.
\bigskip

This test problem can be addressed by two different types of test procedures. 
If one is interested in comparing each pair of transition intensities $\alpha^{(1)}_{0j}(t)$ and $\alpha^{(2)}_{0j}(t)$, $j=1,\ldots,k$, over the entire interval $[0,\mathcal{T}]$ individually, we propose to do a separate test for each of these $k$ comparisons and to combine them via IUP \cite{berger1982} as described in Binder et al. \cite{binder2022similarity}. 
This method has the advantage that one can make inference about particular differences between transitions and the threshold in \eqref{h3} can be replaced by individually chosen thresholds $\Delta_j$, $j=1,\ldots,k$, for each single comparison. 
% hab das hier ein bisschen verändert, weil ich das Gefühl hatte, Äpfel mit Birnen zu vergleichen wenn wir jetzt wieder individual thresholds zulassen.
However, if the threshold $\Delta$ is globally chosen, as stated in \eqref{h2} and \eqref{h3}, applying the same principle means that the similarity of the $j$-th transition intensities is assessed by testing the individual hypothesis
\begin{equation}\label{h2ind}
H^j_0 : \| \alpha^{(1)}_{0j} - \alpha^{(2)}_{0j} \|_\infty \geq \Delta
\end{equation}
versus
\begin{equation}\label{h3ind}
H^j_1 : \| \alpha^{(1)}_{0j} - \alpha^{(2)}_{0j} \|_\infty < \Delta.
\end{equation}
However, combining these individual tests to obtain a global test decision results in a noticeable loss of power, which is a well known consequence of tests based on the IUP \cite{phillips1990}. 
Therefore, if one is interested in claiming similarity of the whole competing risks models rather than comparing particular transition intensities, another test procedure should be considered.
This procedure is based on re-formulating $H_1$ in \eqref{h3} to 
\begin{equation}\label{h1new}
 H_1:\ \max_{j=1}^k \| \alpha^{(1)}_{0j} - \alpha^{(2)}_{0j} \|_\infty < \Delta,\end{equation}
which gives rise to another test statistic.
Based on this, the following algorithm describes a much more powerful procedure for testing the hypotheses  \eqref{h2} against \eqref{h1new}. %Due to the sake of brevity, the time dependency of the transition intensities is omitted in the notations. 
\bigskip
\begin{breakablealgorithm} 
\caption{}
\label{alg1} 
\begin{itemize}
\item[(1)]	For both samples, calculate the MLE $\hat\theta^{(\ell)}$ and $\hat\psi^{(\ell)}$, $\ell=1,2$, by maximizing the log-likelihood given in \eqref{loglikelihood_cens}, in order to obtain the transition intensities $\hat\alpha^{(1)}$ and $\hat\alpha^{(2)}$ with $\hat\alpha^{(\ell)}=(\hat\alpha^{(\ell)}_{01},\ldots,\hat\alpha^{(\ell)}_{0k})$ and the parameters $\hat\psi^{(\ell)}$, $\ell=1,2$, of the underlying censoring distributions.
Note that, in case of no random censoring, it suffices to maximize the log-likelihood in \eqref{loglikelihood}. %, yielding the parameters $\hat\theta^{(\ell)}$ and hence the transition intensities.
From the estimates, calculate the corresponding test statistic  
		$$\hat d:=\max_{j=1}^k \| \hat\alpha^{(1)}_{0j} - \hat\alpha^{(2)}_{0j} \|_\infty.$$
\item[(2)] %\HD{{\bf Müssen wir das nicht auf die $\theta^{(\ell)}$ umschreiben?} 
In a second estimation step, we define  constrained estimates  $\overline \theta^{(1)}$ and $\overline \theta^{(2)}$ of  $\theta^{(1)}$ and $\theta^{(2)}$, 
%\overline \alpha^{(2)}$ of $\alpha^{(1)}$ and $\alpha^{(2)}$ 
maximizing the sum
			$\log \mathcal{L}_1 (\theta^{(1)})+\log \mathcal{L}_2 (\theta^{(2)})$
			of the log-likelihood functions defined 
			in \eqref{loglikelihood}   under the additional constraint 
			\begin{equation}\label{constr}
	\max_{j=1}^k  \| \alpha^{(1)}_{0j}-\alpha^{(2)}_{0j} \|_\infty =\Delta.
			\end{equation}
   Further define
			\begin{equation}\label{MLcons}
			\hat{\hat{\theta}}^{(\ell)}= \left\{
			\begin{array} {ccc}
			\hat \theta^{(\ell)} & \mbox{if} & \hat d \geq \Delta \\
			\overline \theta^{(\ell)} & \mbox{if} & \hat d < \Delta
			\end{array}  \right.,\   \ell=1,2,
			\end{equation}
			where $\hat{\hat{\theta}}^{(\ell)}  = ( \hat{\hat{ \theta}}^{(\ell)}_{01}, \ldots ,\hat{ \hat{ \theta}}^{(\ell)}_{0k})^\top$. 
%We denote these estimates by $\hat{\hat{ \alpha}}^{(\ell)},\ l=1,2$.
%_{0j},\ j=1,...,k,\ l=1,2$.
From this we obtain constrained estimates of the transition intensities $\hat{\hat{\alpha}}_{0j}^{(\ell  )}(t)=\alpha_{0j}^{(\ell  )}(t, \hat{\hat{\theta}}^{(\ell)}_{0j})$, $j=1,\ldots,k$, $\ell=1,2$. 
Finally, note that this constraint optimization does not affect the estimation of the censoring distribution.
\item[(3)] 
By using the constrained estimates $\hat{\hat{ \alpha}}^{(\ell)}=\hat{\hat{ \alpha}}^{(\ell)}(t)=(\hat{\hat{\alpha}}_{01}^{(\ell  )}(t),\ldots,\hat{\hat{\alpha}}_{0k}^{(\ell  )}(t))$, simulate bootstrap event times $T^{*(1)}_1, \ldots, T^{*(1)}_{n_1}$ and  $T^{*(2)}_1, \ldots, T^{*(2)}_{n_2}$. Specifically we use the simulation approach as described in \citet{beyersmann2009}, where at first for all individuals survival times are simulated with all-cause hazard $\sum_{j=1}^{k}\hat{\hat{ \alpha}}^{(\ell)}_{0j}(t)$ as a function of time and then a multinomial experiment is run for each survival time $T$ which decides on state $j$ with probability $\hat{\hat{ \alpha}}^{(\ell)}_{0j}(T)/\sum_{j=1}^{k}\hat{\hat{ \alpha}}^{(\ell)}_{0j}(T)$.
In order to represent the censoring adequately, we now use the parameters $\hat\psi^{(\ell)}$, $\ell=1,2$ from step (i) to additionally generate bootstrap censoring times $C^{*(1)}_1, \ldots, C^{*(1)}_{n_1}$ and  $C^{*(2)}_1, \ldots, C^{*(2)}_{n_2}$, according to a distribution with distribution function $G^{(1)}(t,\psi^{(1)})$ and $G^{(2)}(t,\psi^{(2)})$, respectively. Finally, the bootstrap samples are obtained by taking the minimum of these times in each case, that is 
$\tilde{T}^{*(\ell)}_i=\min(T^{*(\ell)}_i,C^{*(\ell)}_i)$.
%, $i=1,\ldots,n_{\ell},\ \ell=1,2$.
Note that, in case of no random but administrative censoring with a fixed end of the study $\tau$, we take $\tilde{T}^{*(\ell)}_i=\min(T^{*(\ell)}_j,\tau)$, $i=1,\ldots,n_{\ell},\ \ell=1,2$.

For the datasets $X^{*(1)}_1, \ldots, X^{*(1)}_{n_1} $
			and  $X^{*(2)}_1, \ldots, X^{*(2)}_{n_2}$, consisting of the potentially censored event time and the simulated state of an individual, calculate
			the MLE $\hat\alpha^{*(1)}$ and $\hat\alpha^{*(2)}$ by maximizing \eqref{loglikelihood} and the test statistic as in Step (i), that is
  % \HD{\bf Ich denke, hier muss $j$ statt $j_0$ stehen!}
			$$\hat d^{*}:=\max_{j=1}^k\| \hat\alpha^{*(1)}_{0j} - \hat\alpha^{*(2)}_{0j}\|_\infty.$$

\item[(4)] Repeat Step (3) $B$ times to generate $B$ replicates of the test statistic $\hat d^{*(1)}, \ldots,  \hat d^{*(B)}$, yielding an estimate of the $\alpha$-quantile of the distribution of the statistic $d^*$, which is denoted by $ q^*_\alpha$. Finally reject the null hypothesis in \eqref{h2} if
\begin{align}
\label{hd1}
\hat{d}\leq q_\alpha^*.
\end{align}
Alternatively, a test decision can be made based on the $p$-value 
$$\hat F_{B}(\hat d) = {\frac{1}{B}} \sum_{i=1}^{B}  I\{ \hat d^{*(i)} \leq \hat d \},$$
where $\hat F_{B}$ denotes the empirical distribution function of the bootstrap sample. 
		Finally, we reject the null hypothesis \eqref{h2}  if $\hat F_{B}(\hat d)<\alpha$ for a pre-specified significance level $\alpha$. 
%\KM{hier muss ich noch den p-Wert ergänzen}
\end{itemize}
\end{breakablealgorithm}

\bigskip

The following result shows  that Algorithm 
\ref{alg1} defines a valid statistical test for the hypotheses \eqref{h2} and \eqref{h1new}.
The proof is deferred  to the appendix.

\begin{thm} \label{thmneu} 
Assume that $\lim_{n_1,n_2 \to \infty  } {n_1 \over n_2} = c > 0$ and 
that Assumption A -D in \cite{borgan1984} are satisfied. Further let
$$
\| f \|_{\infty , \infty }   :=   \max_{j \in \{ 1, \ldots , k \} } \| f_{j} (t) \|_{\infty } 
=
 \max_{j \in \{ 1, \ldots , k \}  \times \mathcal{T}} | f_{j}(t)| 
$$ 
denote the $\ell^{\infty}$-norm  on the set of  functions $(j,t) \to f_{j} (t)$ defined on $  \{ 1, \ldots , k \}  \times \mathcal{T}.$
Then the test defined by \eqref{hd1} is consistent and has asymptotic level $\alpha $ for the hypotheses 
\eqref{h2} and \eqref{h1new}. More precisely, 
\begin{itemize}
\item[(1)]
if  the null hypothesis in \eqref{h2} is satisfied, then we have for any $\alpha \in (0, 0.5)$
\begin{equation}\label{level2.1}
\limsup_{n \rightarrow\infty}\mathbb{P}\big(
  \hat{d}\leq q_\alpha^*
\big)\leq\alpha,
\end{equation}
\item[(2)] if the null hypothesis in \eqref{h2} is satisfied and
the set 
\begin{align} \label{det1}
 \mathcal{E} =    \Big \{ (j,t) \in \{ 1, \ldots , k \} \times \mathcal{T} ~:~ 
| \hat\alpha^{(1)}_{0j} (t) - \hat\alpha^{(2)}_{0j} (t) |  = 
 \| \hat\alpha^{(1)} - \hat\alpha^{(2)}\|_{\infty,\infty} 
\Big \} 
\end{align}
consists of one point, then we have for any $\alpha \in (0, 0.5)$
\begin{equation}\label{level2}
\lim_{n\rightarrow\infty} \mathbb{P}
\big(
   \hat{d}\leq q_\alpha^*  \big) = \left\{
\begin{array} {ccc}
0 & \mbox{if} & \max_{j=1}^k \| \hat\alpha^{(1)}_{0j} - \hat\alpha^{(2)}_{0j} \|_\infty   > \Delta  \\
\alpha & \mbox{if} & \max_{j=1}^k \| \hat\alpha^{(1)}_{0j} - \hat\alpha^{(2)}_{0j} \|_\infty  = \Delta  
\end{array} \right .,
\end{equation}
%\KM{Frage an Holger in Bezug auf unser letztes zoom Meeting: Müssen wir hier nicht zusätzlich zu de Bedingung dass $\mathcal{E}$ einelementig ist fordern, dass die Menge der Punkte, für die $| \hat\alpha^{(1)}_{0j}(t) - \hat\alpha^{(2)}_{0j}(t)|$ maximal wird, ebenfalls einelementig ist? Das entspricht dann der Aussage dass die Menge $\mathcal{E}^{\pm}$ ebenfalls nur ein Element hat (also eine davon, die andere ist leer). Und ist es nicht problematisch, dass das $\mathcal{E}$ hier ein anderes ist als das weiter unten im Beweis?}
%\HD{\bf Habe das beantwortet und die neue Version sollte stimmen - bitte prüfen und gfs unsere ``bunte'' Diskussion löschen!!}
%\KM{So sieht es korrekt aus- ich denke nur, wir sollten $\| f \|_{\infty , \infty }$ vieleicht schon hier im Theorem definieren und nicht erst im Beweis?}
%\HD{das kannst Du sehr gerne machen!!} 
\item[(3)] if the alternative in \eqref{h1new}   is satisfied, then we have for any $\alpha \in (0, 0.5)$
\begin{equation}
\lim_{n\rightarrow\infty} \mathbb{P}
\big(
   \hat{d}\leq q_\alpha^*  \big) 
   =1.\label{consistence2}
\end{equation}
\end{itemize}
\end{thm}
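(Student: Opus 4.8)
The plan is to reduce everything to large-sample distribution theory for $\hat d$ and the bootstrap quantile $q_\alpha^*$. First I would invoke Assumptions A--D of \cite{borgan1984}: for parametric counting-process models these yield consistency $\hat\theta^{(\ell)}\to\theta^{(\ell)}$ and $\sqrt{n_\ell}$-asymptotic normality of the (unconstrained) maximum likelihood estimators, so that, using $n_1/n_2\to c>0$ and independence of the two samples, $\sqrt n\big((\hat\theta^{(1)},\hat\theta^{(2)})-(\theta^{(1)},\theta^{(2)})\big)$ is asymptotically centered Gaussian. Since each $\theta\mapsto\alpha^{(\ell)}_{0j}(\cdot,\theta)$ is smooth uniformly on $\mathcal T$, the functional delta method transfers this to weak convergence of the difference process $\sqrt n\big((\hat\alpha^{(1)}_{0j}-\hat\alpha^{(2)}_{0j})-(\alpha^{(1)}_{0j}-\alpha^{(2)}_{0j})\big)_{j}$, as an element of the continuous functions on $\{1,\dots,k\}\times\mathcal T$, to a centered Gaussian field $\mathbb G$. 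Running the same argument conditionally on the data for the parametric bootstrap --- whose event times come from $\hat{\hat\alpha}^{(\ell)}$ and whose censoring follows $G^{(\ell)}(\cdot,\hat\psi^{(\ell)})$, all consistent estimators --- yields the analogous conditional convergence to a Gaussian field with the same, continuously varying, covariance.

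Next I would bring in the functional $T(\theta):=\|(\alpha^{(1)}_{0j}(\cdot,\theta^{(1)}_{0j})-\alpha^{(2)}_{0j}(\cdot,\theta^{(2)}_{0j}))_j\|_{\infty,\infty}$. The map $f\mapsto\|f\|_{\infty,\infty}$ is Hadamard \emph{directionally} differentiable at $f_0\neq0$ with derivative $h\mapsto\sup_{(j,t)\in\mathcal E_0}\mathrm{sgn}(f_{0,j}(t))\,h_j(t)$ over the extremal set $\mathcal E_0$, and this derivative is linear --- so $T$ is \emph{fully} Hadamard differentiable --- precisely when $\mathcal E_0$ is a single point. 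The observation that pins down the calibration is that the bootstrap intensities always have maximal deviation $T(\hat{\hat\theta})=\max(\hat d,\Delta)$: on $\{\hat d\ge\Delta\}$ one resamples from the unconstrained fit, so $T(\hat{\hat\theta})=\hat d$, whereas on $\{\hat d<\Delta\}$ the side condition \eqref{constr} forces $T(\hat{\hat\theta})=\Delta$. Therefore the delta method gives $\hat d=d_0+n^{-1/2}A_n$ with $A_n$ converging to the directional derivative of $T$ at $\theta_0$ applied to $\mathbb G$, and $q_\alpha^{*}=\max(\hat d,\Delta)+n^{-1/2}\hat q_{n,\alpha}$ with $\hat q_{n,\alpha}$ converging to the $\alpha$-quantile of the limiting directional derivative of the bootstrap statistic.

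I would then distinguish the three regimes determined by $d_0:=T(\theta_0)$. If $d_0>\Delta$, the event $\{\hat d\ge\Delta\}$ has probability tending to $1$, so $\{\hat d\le q_\alpha^*\}=\{\hat q_{n,\alpha}\ge0\}$; because the sample difference breaks ties, the bootstrap evaluates an almost surely single-point derivative, hence $\hat q_{n,\alpha}$ is asymptotically $\hat\sigma\,\Phi^{-1}(\alpha)<0$ for $\alpha\in(0,1/2)$, so this probability tends to $0$ --- contributing the bound \eqref{level2.1} on this regime and, under the extra singleton assumption, the value $0$ in \eqref{level2}. If $d_0=\Delta$, the least favourable boundary, the constrained estimator is consistent for $\theta_0$; the $\{\hat d\ge\Delta\}$ contribution still vanishes, while on $\{\hat d<\Delta\}$ the event $\{\hat d\le q_\alpha^*\}$ becomes $\{A_n\le\hat q_{n,\alpha}\}$ with $\hat q_{n,\alpha}$ asymptotically negative, and a quantile comparison shows this has limiting probability exactly $\alpha$ when $\mathcal E_0$ is a single point --- completing \eqref{level2} --- and at most $\alpha$ in general, completing \eqref{level2.1}. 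If $d_0<\Delta$, then $\hat{\hat\theta}=\overline\theta$ eventually, $\hat d^{*}$ concentrates at $\Delta$, so $q_\alpha^{*}\to\Delta>d_0$ while $\hat d\to d_0$, and $\mathbb P(\hat d\le q_\alpha^*)\to1$, which is \eqref{consistence2}.

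The hard part will be the boundary regime $d_0=\Delta$ with a non-unique extremal set: since $\|\cdot\|_{\infty,\infty}$ is only directionally, not fully, Hadamard differentiable when $|\mathcal E_0|>1$, the parametric bootstrap need not consistently estimate the law of $\sqrt n(\hat d-d_0)$, and one must argue that it errs conservatively --- ties are broken by the sample fluctuation, so the bootstrap effectively uses a single-point derivative whose $\alpha$-quantile (for $\alpha<1/2$) is negative and no larger than that of the true supremum-type limit --- which is why the sharp identities of part (2) require $\mathcal E$ to reduce to a point whereas part (1) only claims $\limsup\le\alpha$. Subsidiary points to discharge are: existence, measurability and consistency of the constrained maximiser $\overline\theta$ of $\log\mathcal L_1+\log\mathcal L_2$ on $\{T(\theta)=\Delta\}$ (equal to $\theta_0$ exactly when $d_0=\Delta$, and yielding a bootstrap field with the correct limiting covariance); exact activeness $T(\hat{\hat\theta})=\max(\hat d,\Delta)$ of the constraint; non-degeneracy $\hat\sigma^2>0$ of the limiting variance at the extremal point, i.e.\ that the gradient of $(j,t)\mapsto\alpha^{(1)}_{0j}(t)-\alpha^{(2)}_{0j}(t)$ in $(\theta^{(1)},\theta^{(2)})$ is nonzero there (so $\hat\sigma\,\Phi^{-1}(\alpha)$ is genuinely negative); and routine consistency of empirical bootstrap quantiles, so that $q_\alpha^*$ may be replaced by its population counterpart throughout.
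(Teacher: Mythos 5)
Your proposal follows essentially the same route as the paper: asymptotic normality of the MLEs from Borgan's Theorem 2, a delta-method/continuous-mapping step giving weak convergence of the sup-norm statistic with limit determined by the extremal set $\mathcal{E}^{\pm}$, the analogous conditional convergence for the parametric bootstrap centered at $\max(\hat d,\Delta)$, and then the three-regime case analysis of \citet{detmolvolbre2015} (whose Theorem 4 the paper simply invokes, while you spell out the regimes and correctly flag the non-singleton boundary case as the delicate point). No substantive difference in approach; if anything, your explicit centering of the bootstrap statistic at $\max(\hat d,\Delta)$ is a cleaner rendering of the paper's own display.
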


%% Describe simulation study and results
\section{Simulation study}\label{sim}

\subsection{Design}
In this section we present numerous simulation results obtained by the method proposed in Algorithm \ref{alg1}. 
We assume two different settings for the distributions of the transition intensities, resulting in four different scenarios in total. All scenarios are driven by the application example given in Section \ref{appl}.
In Scenario 1 and Scenario 2 we assume the event times to follow an exponential distribution, i.e. all transition intensities are assumed to be constant. This setting is the same as already considered for the simulations in \cite{binder2022similarity}. We denote the approach mentioned therein by "Individual Method" throughout the rest of this paper, as it is based on combining three individual tests, one for each state. Consequently, in this setting all results from the two methods are directly comparable. The parameters of the constant transition intensities are given in Table \ref{tab:applparameter} in Section \ref{appl}, these are used for Scenario 1, yielding $$d=\max_{j=1}^3 \| \alpha^{(1)}_{0j} - \alpha^{(2)}_{0j} \|_\infty=\max\{0.0002,0.0006,0.0005\}=0.0006$$
for Scenario 1. For Scenario 2 we choose identical models, that is $\alpha^{(1)}_{01}=\alpha^{(2)}_{01}=0.001,\ \alpha^{(1)}_{02}=\alpha^{(2)}_{02}=0.0011$ and $\alpha^{(1)}_{03}=\alpha^{(2)}_{03}=0.0004$, respectively, resulting in a difference of $0$ for all transition intensities. \\

For the second setting, i.e. Scenario 3 and Scenario 4, respectively, we assume a Gompertz distribution for the first two states and a Weibull distribution for the third state, i.e. the intensities of the first two states are given by
\begin{equation}\label{scenario2a}\alpha^{(\ell)}_{0j}(t,\theta^{(\ell)}_{0j})=\theta^{(\ell)}_{0j1}\cdot\exp{(\theta^{(\ell)}_{0j2}\cdot t)},\ j=1,2,\ \ell=1,2,\end{equation}
where $\theta^{(\ell)}_{0j1}$ denotes the scale and $\theta^{(\ell)}_{0j2}$ the shape parameter, respectively, and the transition intensity for the third state is given by
\begin{equation}\label{scenario2b}
\alpha^{(\ell)}_{03}(t,\theta^{(\ell)}_{03})=\frac{\theta^{(\ell)}_{032}}{\theta^{(\ell)}_{031}}\cdot\left(\frac{t}{\theta^{(\ell)}_{031}}\right)^{\theta^{(\ell)}_{032}-1},\ \ell=1,2,\end{equation}
where $\theta^{(\ell)}_{031}$ denotes the scale and $\theta^{(\ell)}_{032}$ the shape parameter, respectively. 
By assuming these two distributions, this scenario yields a very accurate approximation to the actual data, see Figure \ref{cumIntens} in Section \ref{appl}. More precisely, modeling the transition intensities by the Gompertz and the Weibull distribution, instead of assuming constant intensities, provides a much better initial model fit, resulting in a simulation setup with very realistic conditions with regard to the real data example. 

We choose the parameters given by the corresponding transition intensities of the application example (see Table \ref{tab:applparameter} in Section \ref{appl}), resulting in 
$$d=\max_{j=1}^3 \| \alpha^{(1)}_{0j} - \alpha^{(2)}_{0j} \|_\infty=\max\{0.0003,0.0028,0.0004\}=0.0028$$
for Scenario 3. 
Similar to Scenario 2, we obtain Scenario 4 in this setting by considering two identical models, such that 
$\theta^{(2)}=\theta^{(1)}$
and consequently we have $d=0$ in this case. Of note, Scenario 2 and Scenario 4 can only be used to simulate the power of the test. 
Table \ref{tab1} gives an overview of the simulation scenarios.

%\tiny
%\begin{table}[]
%\begin{tabular}{l|lll|lll|lll|l}
%           & \multicolumn{3}{c|}{Distribution}       & \multicolumn{3}{c|}{Parameters Model 1}               & \multicolumn{3}{c|}{Parameters Model 2}               & d      \\
%           & State 1     & State 2     & State 3     & State 1         & State 2         & State 3        %   & State 1         & State 2         & State 3           &        \\ \cline{2-10}
%Scenario 1 & Exponential & Exponential & Exponential & 0.001           & 0.0011          & 0.004             & 0.0008          & 0.0017          & 0.0009            & 0.0006 \\
%Scenario 2 & Exponential & Exponential & Exponential & 0.001           & 0.0011          & 0.004             & 0.001           & 0.0011          & 0.004             & 0      \\
%Scenario 3 & Gompertz    & Gompertz    & Weibull     & (0.002, -0.016) & (0.003, -0.036) & (1.102, 2894.757) & (0.002, -0.018) & (0.006, -0.043) & (1.114, 1242.048) & 0.0028 \\
%Scenario 4 & Gompertz    & Gompertz    & Weibull     & (0.002, -0.016) & (0.003, -0.036) & (1.102, 2894.757) & (0.002, -0.016) & (0.003, -0.036) & (1.102, 2894.757) & 0     
%\end{tabular}
%\end{table}
%\normalsize

%man könnte hier die kleineren Thresholds noch in Scen 2+ 4 mitaufnehmen.
\small
\begin{table}[]
\caption{\small Chosen distributions of the simulation scenarios, the resulting maximum distance between transition intensities $d$ and the similarity thresholds $\Delta$ under consideration. Numbers in bold correspond to simulations of type I errors. As $d=0$ in Scenario 2 and Scenario 4, respectively, we only simulate the power there. (Exp.= Exponential) }
\small 
\begin{tabular}{l|lll|l|l}
           & \multicolumn{3}{c|}{Distribution}       & \multicolumn{1}{c|}{\multirow{2}{*}{d}} & \multirow{2}{*}{Thresholds $\Delta$}      \\
           & State 1     & State 2     & State 3     & \multicolumn{1}{c|}{}                   &                                           \\ \cline{1-6} 
Scen. 1 & Exp. & Exp. & Exp. & 0.0006                                  & \textbf{0.0006}, 0.001, 0.0015              \\
Scen. 2 & Exp. & Exp. & Exp. & 0                                       &  0.001, 0.0015           \\
Scen. 3 & Gompertz    & Gompertz    & Weibull     & 0.0028                                  & \textbf{0.002}, \textbf{0.0028}, 0.004, 0.005, 0.007, 0.01 \\
Scen. 4 & Gompertz    & Gompertz    & Weibull     & 0                                       &  0.004, 0.005, 0.007, 0.01                   
\end{tabular}\label{tab1}
\end{table}\normalsize

Based on the application example, where $n_1=213$ and $n_2=482$ patients are observed in the first and second group, we consider a range of different sample sizes, i.e., $n=(n_1,n_2)=(200,200),(250,300),(300,300),(250,450),(300,500)$ and $(500,500).$
Also driven by the application example, we assume administrative censoring with a given follow-up period of $90$ days. Consequently, we consider two competing risk models, each with $j=3$ states over the time range $\mathcal{T}=[ 0,90 ]$. If there is no transition to one of the three states, an individual is administratively censored at these $90$ days. 

To additionally investigate the effect of different types of censoring we consider a second setting replacing the administrative censoring by random right-censoring, where censoring times are generated according to an exponential distribution. Here, the observed time for an individual is given by the minimum of the simulated censoring time and the event time, respectively. By varying the rate parameter of the exponential distribution we are able to investigate the effect of different amounts of censoring. Precisely, we consider different rate parameters between $0.0002$ and $0.01$, resulting in approximately $16\%$ up to $85\%$ of the individuals being censored (details for the particular scenarios are given when discussing the results in Section \ref{results}). 
For the sake of brevity, when investigating the effect of random censoring, we restrict ourselves to Scenarios 1 and 3 respectively, and three different sample sizes, that is $n=(n_1,n_2)=(200,200),(300,300)$ and $(500,500)$.

The data in all simulations is generated according to the algorithm described in \citet{beyersmann2009}.
All simulations have been run using R Version 4.3.0. The total number of simulation runs is $N=1000$ for each configuration and due to computational reasons the test is performed using $B=250$ bootstrap repetitions. 
The computation time using an Intel Core i7 CPU with 32GB RAM for one particular dataset with $B=250$ bootstrap repetitions is approximately $10$ seconds for scenarios 1 and 2 and varies between $3$min and $11$min for scenarios 3 and 4, depending on the sample size under consideration.

\subsection{Results}\label{results}
\subsubsection{Scenario 1}

%As mentioned above, we now assume transition intensities to be constant, i.e. event times are generated assuming an exponential distribution. 
In order to simulate the type I error and the power of the procedure described in Algorithm \ref{alg1}, we consider different similarity thresholds $\Delta$.
When simulating type I errors, we assume $\Delta=d$ in both scenarios under consideration, reflecting the situation on the margin of the null hypothesis.
Thus, in Scenario 1, we set $\Delta=0.0006$. \\
First, we consider administrative censoring as described above, that is, a fixed end point of the study at $\tau=90$ days. The first row of Figure \ref{scen1a} displays the type I error rates of the procedure proposed in Algorithm \ref{alg1} in dependence of the sample size, directly compared to the ones derived by the ``Individual method'' presented in \cite{binder2022similarity} (see also Section \ref{simapp}). %, where the latter is based on combining three individual tests, one for each state %.Precisely, using this method, $H_0$ in \eqref{h2} is rejected if all of the three  individual hypotheses of similarity are rejected 
%(see also Section \ref{simapp}). 
We observe that type I errors are much closer to the desired level of $\alpha=0.05$, whereas they are practically $0$ for the individual method. The still rather conservative behaviour of the test can be explained theoretically: according to Theorem \ref{thmneu} we expect type I errors to be smaller than $\alpha$, as transition intensities are constant and consequently their differences are constant functions as well, meaning that the set of points maximizing these functions each consists of the entire time range $\mathcal{T}$. 

The second and third row of Figure \ref{scen1a} visualize the power for both procedures, for $\Delta=0.001$ and $\Delta=0.0015$, respectively. For the latter the difference between the two methods is rather small and only visible for small sample sizes.
However, for $\Delta=0.001$ we clearly observe that the power of the new method is higher than the power of the individual method, for all sample sizes under consideration. This superiority is stronger for smaller sample sizes, which is in line with the theoretical findings, stating that both procedures are consistent as stated in \eqref{consistence2}.
\\

\begin{figure}[t]
\begin{center}
\includegraphics[width=0.85\textwidth]{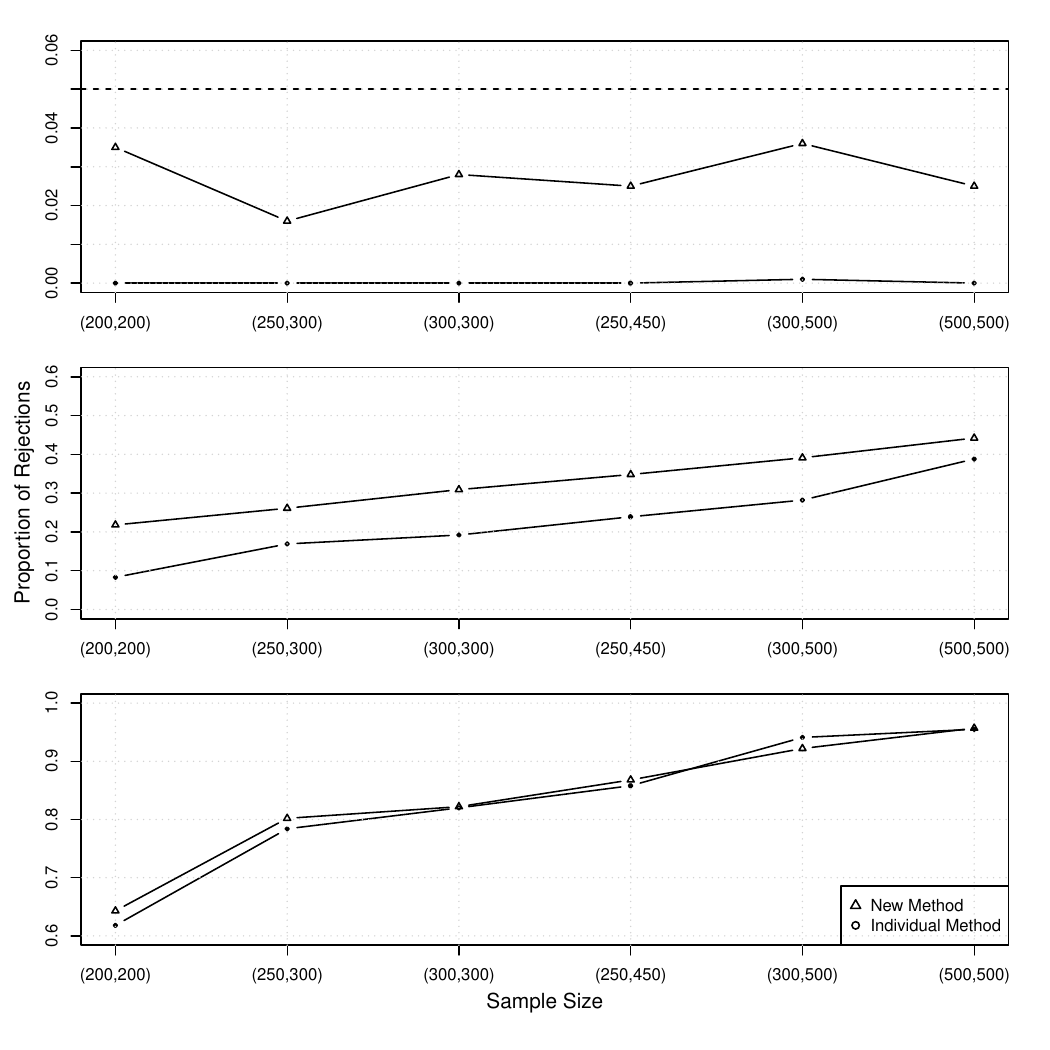}
\caption{Scenario 1: Proportion of rejections in dependence of the sample size for the new method and the individual method \cite{binder2022similarity}. The three rows display different choices of $\Delta$, that is $\Delta=0.0006$ corresponding to the null hypothesis in the top row, $\Delta=0.001$ in the middle and $\Delta=0.0015$ in the bottom row, where the latter two correspond to the situation under alternative. The dashed line in the first row indicates the nominal level chosen as $\alpha=0.05$.}
\label{scen1a}
\end{center}
\end{figure}

Regarding the effect of random right censoring, Table \ref{tab_cens} displays the results for the simulated type I error and the power considering different amounts of censoring. Precisely, we consider censoring rates between $0.001$ and $0.1$, resulting in approximately $22\%$ to $80\%$ of the individuals being censored. The first column corresponds to the null hypothesis \eqref{h2}, whereas the last two columns present the power of the procedures for the two different thresholds $\Delta=0.001$ and $\Delta=0.0015$, respectively. The numbers in brackets correspond to the results from the individual procedure \cite{binder2022similarity} for an easier comparability. 
It turns out that, in contrast to administrative censoring, the new method suffers from some type I error inflation for low sample sizes if censoring rates become large. The opposite holds for the individual procedure which is extremely conservative, as the simulated level is practically zero in all configurations. This type I error inflation disappears for increasing sample sizes. For instance, considering $n_1=n_2=500$ all simulated type I errors are below $0.06$, except for a censoring rate of $0.01$, corresponding to approximately $80\%$ of the individuals being censored. %Of note, this results in only $100$ events per group despite the large sample size. 
Hence we conclude that type I errors still converge to the desired level of $\alpha=0.05$ with increasing sample sizes.

Regarding the power we observe a substantial improvement with the new method for almost all configurations, particularly in case of small sample sizes and large censoring rates, e.g. achieving now a simulated power of $0.290$ instead of $0.076$ for $n_1=n_2=200$ and a censoring rate of $0.01$.
If sample sizes are large, the results of both procedures are qualitatively the same which is in line with the asymptotic theory stated in Binder et al.\cite{binder2022similarity} and in Theorem \ref{thmneu} of this paper. 
%Finally, we conclude that compared to administrative censoring, the procedure performs worse in case of random censoring, when censoring rates are large. This affects the type I errors, which are larger for the latter, and the power, which is substantially smaller.   

\small
\begin{table}[htb]
\caption{Scenario 1: Simulated level (column 4) and power (columns 5-6) of the new method, i.e. the test  described in Algorithm \ref{alg1}, considering different sample sizes, censoring rates and thresholds $\Delta$. The numbers in brackets correspond to the results from the individual procedure. The nominal level is chosen as $\alpha=0.05$. The third column displays the mean proportions of censored individuals.} \label{tab_cens}
	\centering
\begin{tabular}{c|c|c|c|c|c}
%\toprule
\hline
$(n_1,n_2)$ & Censoring rate & Censored & $\Delta=0.0006$ & $\Delta=0.001$ & $\Delta=0.0015$ \\
%\midrule
\hline
\multirow{5}{*}{(200,200)}      &   0.001  & $25\%$  & 0.037 (0.000) & 0.534 (0.476)   & 0.964 (0.852)      \\
                                &    0.002   & $40\%$  & 0.042 (0.000) & 0.410 (0.363)    & 0.916 (0.752)    \\
                                &    0.003  & $50\%$   & 0.053 (0.000) & 0.373 (0.313)   & 0.807 (0.656)    \\
                                &   0.005   & $63\%$   & 0.107 (0.000)  & 0.326 (0.226)   & 0.772 (0.480)      \\
                                &   0.01    & $77\%$   & 0.220  (0.000)  & 0.290 (0.076)  & 0.535 (0.209)    \\%\midrule
                                \hline

\multirow{5}{*}{(300,300)}      & 0.001    & $25\%$  & 0.045 (0.001)  & 0.694 (0.618)    &0.990 (0.966)      \\
                                & 0.002   &  $40\%$ & 0.060 (0.000)  & 0.587 (0.553)     &0.965 (0.932)      \\
                                & 0.003  &  $50\%$ & 0.056 (0.000)  & 0.504 (0.470)    &0.902 (0.853)      \\
                                & 0.005   &  $63\%$    & 0.083 (0.001) & 0.359 (0.356)    &0.772 (0.758)      \\
                                & 0.01   &  $77\%$  & 0.161 (0.000)     & 0.307 (0.183)     &0.562 (0.452)      \\%\midrule
                                \hline
\multirow{5}{*}{(500,500)}      & 0.001   &   $25\%$  & 0.050 (0.001)   & 0.847 (0.831)     &1.000 (1.000)      \\
                                & 0.002   &   $40\%$    & 0.057 (0.000)   & 0.791 (0.751)     &0.987 (0.985)      \\
                                & 0.003   &   $50\%$   & 0.052 (0.000)   & 0.685 (0.682)     &0.967 (0.976)      \\
                                & 0.005   &   $63\%$    & 0.060 (0.000)   & 0.545 (0.583)     &0.887 (0.922)      \\
                                & 0.01   &  $77\%$    & 0.119 (0.000)     & 0.371 (0.333)    & 0.664 (0.772)    \\ 
%\bottomrule
\hline
\end{tabular}
\end{table}
\normalsize 

\subsubsection{Scenario 2}

We still assume constant intensities for all transitions, but choose two identical models, that is $\theta^{(2)}=\theta^{(1)}$, resulting in $d=0$. Consequently, we now simulate the maximum power of the test. 
Figure \ref{scen2} (a) displays a direct comparison of the method proposed in Algorithm \ref{alg1} and the individual method. We observe that for the smaller similarity threshold of $\Delta=0.001$ the power of the new method is higher for all sample sizes under consideration. Of note, this effect is much more visible for smaller sample sizes. For instance, considering $n_1=n_2=200$ the power is given by $0.652$ for the new method and $0.415$ for the individual method, respectively, whereas almost identical values ($0.982$ and $0.987$, resp.) are observed for the largest sample size of  $n_1=n_2=500$.
Considering $\Delta=0.0015$, the same conclusion can be drawn for larger similarity thresholds, as all values of the simulated power are qualitatively the same across the two methods.

\begin{figure}[t]
\begin{center}
(a)\includegraphics[width=0.46\textwidth]{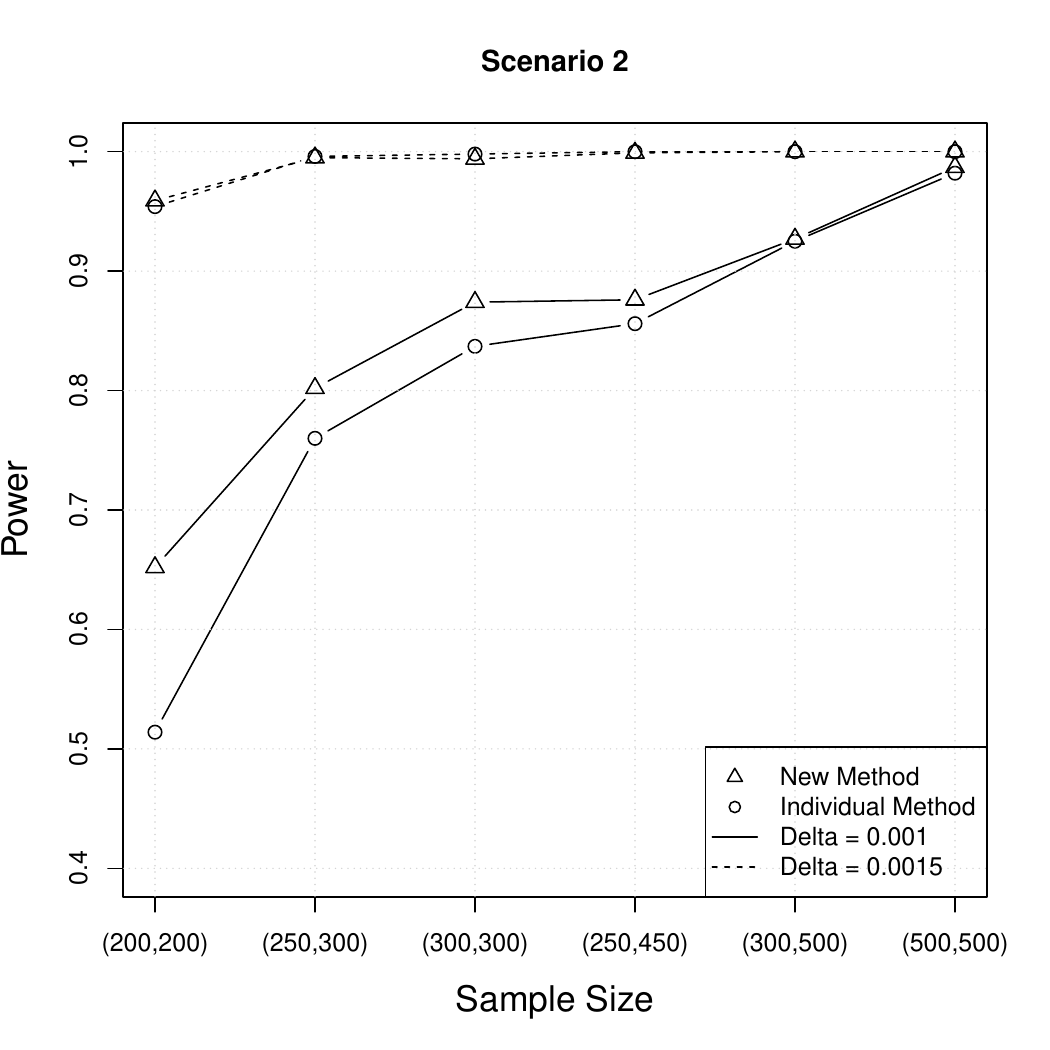}
(b)\includegraphics[width=0.46\textwidth]{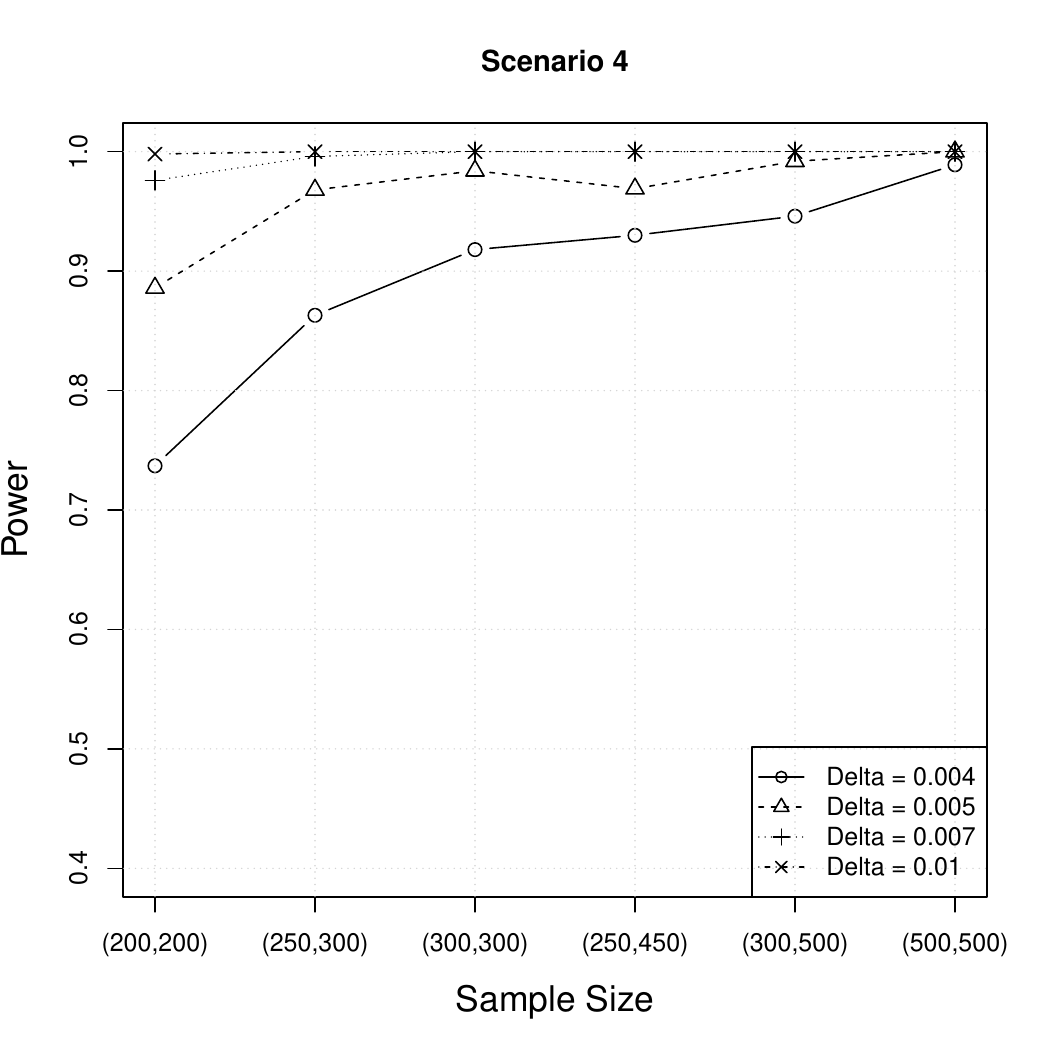}
\caption{(a) Scenario 2: Power of the new method and the individual method \cite{binder2022similarity} in dependence of the sample size for different similarity thresholds. (b) Scenario 4: Power of the new method in dependence of the sample size for different similarity thresholds. }
\label{scen2}
\end{center}
\end{figure}

\subsubsection{Scenario 3}
For simulating the type I error in Scenario 3 we consider $\Delta=0.002$ and $\Delta=0.0028$, the latter again reflecting the situation of being on the margin of the null hypothesis. 
 Note that for this choice of parameters for each of the three difference curves $\alpha^{(1)}_{0j}(t) - \alpha^{(2)}_{0j}(t)$, $j=1,2,3$, the maximum over the time range $\mathcal{T}$ is attained at one single point. Consequently, the set  $\mathcal{E}$ defined in Theorem \ref{thmneu} consists of this one point, meaning that this simulation scenario reflects the situation in \eqref{level2.1}. 

 Again, these theoretic findings are supported by the simulation results, which are displayed in Table \ref{tab:scen3}. Precisely, we observe that type I error rates converge to the desired level of $\alpha=0.05$ with increasing sample sizes. For instance, considering the scenario which is the closest to the application example, i.e. $(n_1,n_2)=(250,400)$ the simulated type I error is given by $0.059$. However, we observe a slight type I error inflation for the smaller samples under consideration, that is up to 300 patients per group. For example, the highest observed type I error is given by $0.110$, attained for sample sizes of $n_1=n_2=200$. Of note, for this configuration the number of expected transitions is only 36 for group 1 and 46 for group 2, respectively, due to the high amount of censoring (see also Section \ref{appl}).
 %, which explains the slight type I error inflation.
The power increases with increasing sample sizes. We note that the threshold should not be too small, as the power is not very satisfying in this case. For instance, we observe a power of $0.2$ for a medium sample size of $n_1=n_2=300$ and a very small threshold of $\Delta=0.004$, whereas it almost doubles for $\Delta=0.005$ and finally approximates $1$ for $\Delta=0.01$. 
%To make these results interpretable, these choices correspond to allowing for ... \KM{koennen wir hier Patientenzahlen angeben um es irgendwie greifbarer zu machen?} \\

Finally, Figure \ref{scenario3cens} displays the results for the simulated type I error and the power considering different amounts of random right censoring for a fixed sample size of $n_1=n_2=500$. Censoring rates are chosen as $0.0002$, $0.001$, $0.002$ and $0.005$, resulting in mean proportions of censored individuals ranging from approximately $16\%$ up to $80\%$. We observe that even for high censoring rates the power is reasonably high and, moreover, higher than in case of administrative censoring at the end of the study. However, this comes at the cost of a slightly inflated type I error, which attains its maximum of $0.091$ for the highest censoring rate of $0.005$. When considering administrative censoring, which results in very similar proportions of censored individuals, the corresponding type I error is given by $0.055$, demonstrating that for this type of censoring the problem of type I error inflation does not occur. 
%For sample sizes of $n_1=n_2=200$ and $n_1=n_2=300$ the results are qualitatively the same. 

\begin{table}[htb]
\caption{Scenario 3: Simulated level (second column) and power (columns 3-6) of the new method, i.e. the test described in Algorithm \ref{alg1}, considering different sample sizes and thresholds $\Delta$. The nominal level is chosen as $\alpha=0.05$.} \label{tab_cens2}
	\centering
 \small
\begin{tabular}{c|c|c|c|c|c}
%\toprule
\hline
$(n_1,n_2)$ &  $\Delta=0.0028$ & $\Delta=0.004$ & $\Delta=0.005$ & $\Delta=0.007$ & $\Delta=0.01$ \\
%\midrule
\hline
\multirow{1}{*}{(200,200)} &      0.110     &   0.198    &  0.316   & 0.602 & 0.920     \\
\multirow{1}{*}{(250,300)} &      0.086    &   0.176    &   0.324  & 0.692 & 0.974    \\
\multirow{1}{*}{(300,300)} &      0.080     &   0.200        &  0.367    & 0.756 & 0.982   \\
\multirow{1}{*}{(250,450)} &   0.059    &  0.156   &   0.311    & 0.734  & 0.991  \\
\multirow{1}{*}{(300,500)} &   0.048    &   0.160    &  0.336  & 0.822 & 0.997   \\
\multirow{1}{*}{(500,500)} &  0.055  &   0.233     &  0.528   & 0.920 & 1.000     \\
%\bottomrule
\hline
\end{tabular}\label{tab:scen3}
\end{table}

\begin{figure}[t]
\begin{center}
\includegraphics[width=0.7\textwidth]{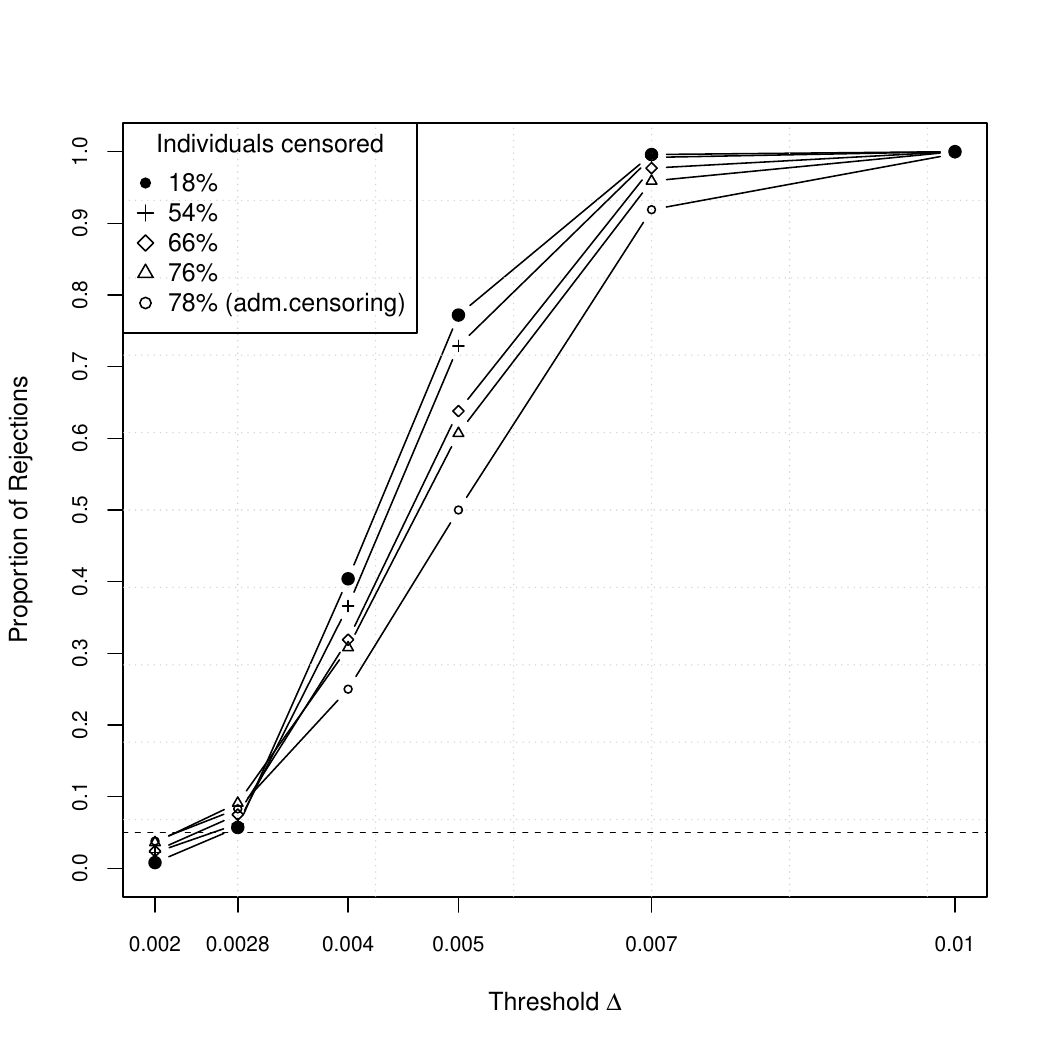}
\caption{Scenario 3: Proportion of rejections for different amounts of censoring at a fixed sample size of $n_1=n_2=500$ in dependence of the threshold. The first two thresholds correspond to the null hypothesis (where the second one displays the margin situation), the last four to the alternative. The dashed line indicates the nominal level $\alpha=0.05$. }\label{scenario3cens}
\end{center}
\end{figure}

\subsubsection{Scenario 4}

We now consider two identical models as in Scenario 2, but we assume a Gompertz distribution for the first two states and a Weibull distribution for the third one, respectively. All other configurations remain as described in Scenario 3. Consequently, we thereby simulate the maximum power, as $d=0$. Figure \ref{scen2} (b) displays the power of the test in dependence of the sample size for different similarity thresholds $\Delta$. We note that the power is reasonably high and above $0.8$ for all configurations except for the combination of the smallest threshold and the smallest sample size.

%% Application example
\section{Application example: Healthcare pathways of prostate cancer patients involving surgery} \label{appl}

\begin{figure}[t]
\begin{center}
\includegraphics[width=\textwidth]{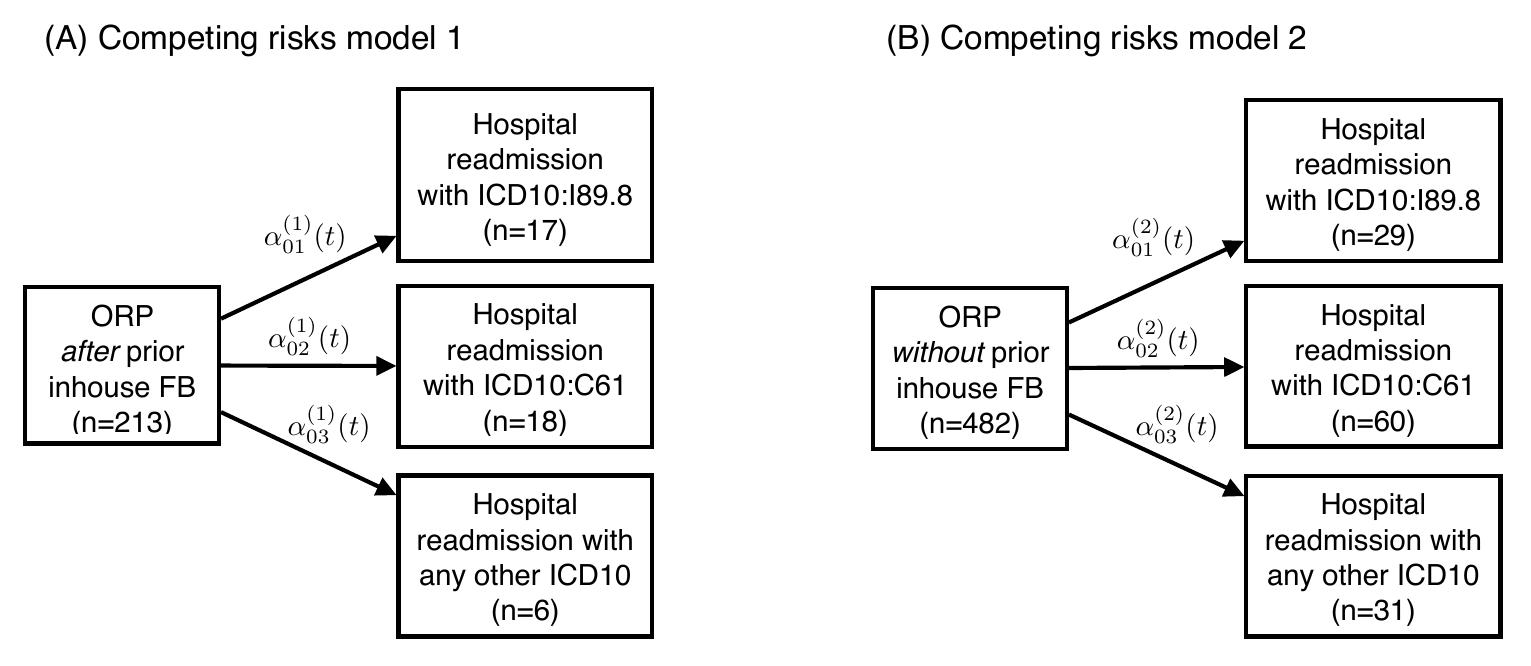}
\caption{Competing risks multi-state models illustrating healthcare pathways for two populations: (A) patients {\it receiving} inhouse fusion biopsy prior to open radical prostatectomy and (B) patients {\it not receiving} inhouse fusion biopsy prior to open radical prostatectomy. The arrows indicate the transitions between the states that are investigated. The $\alpha_{0j}^{(\ell)}(t)$, $ j=1,2,3,\ \ell=1,2$ mark the transition intensities as functions of time (see eq.~\eqref{intensities}).}
\label{uromsm}
\end{center}
\end{figure}

In our application example, we examine coding data from routine inpatient care of prostate cancer patients at the Department of Urology at the Medical Center - University of Freiburg, which was systematically processed as part of the German Medical Informatics Initiative. For each inpatient case, the main and secondary diagnoses are coded in the form of ICD10 codes (10th revision of the International Statistical Classification of Diseases and Related Health Problems); in addition, all applied and billing-relevant diagnostic and therapeutic procedures are coded together with a time stamp in the form of OPS codes (operation and procedure codes). 

\begin{figure}[t]
\begin{center}
\includegraphics[width=\textwidth]{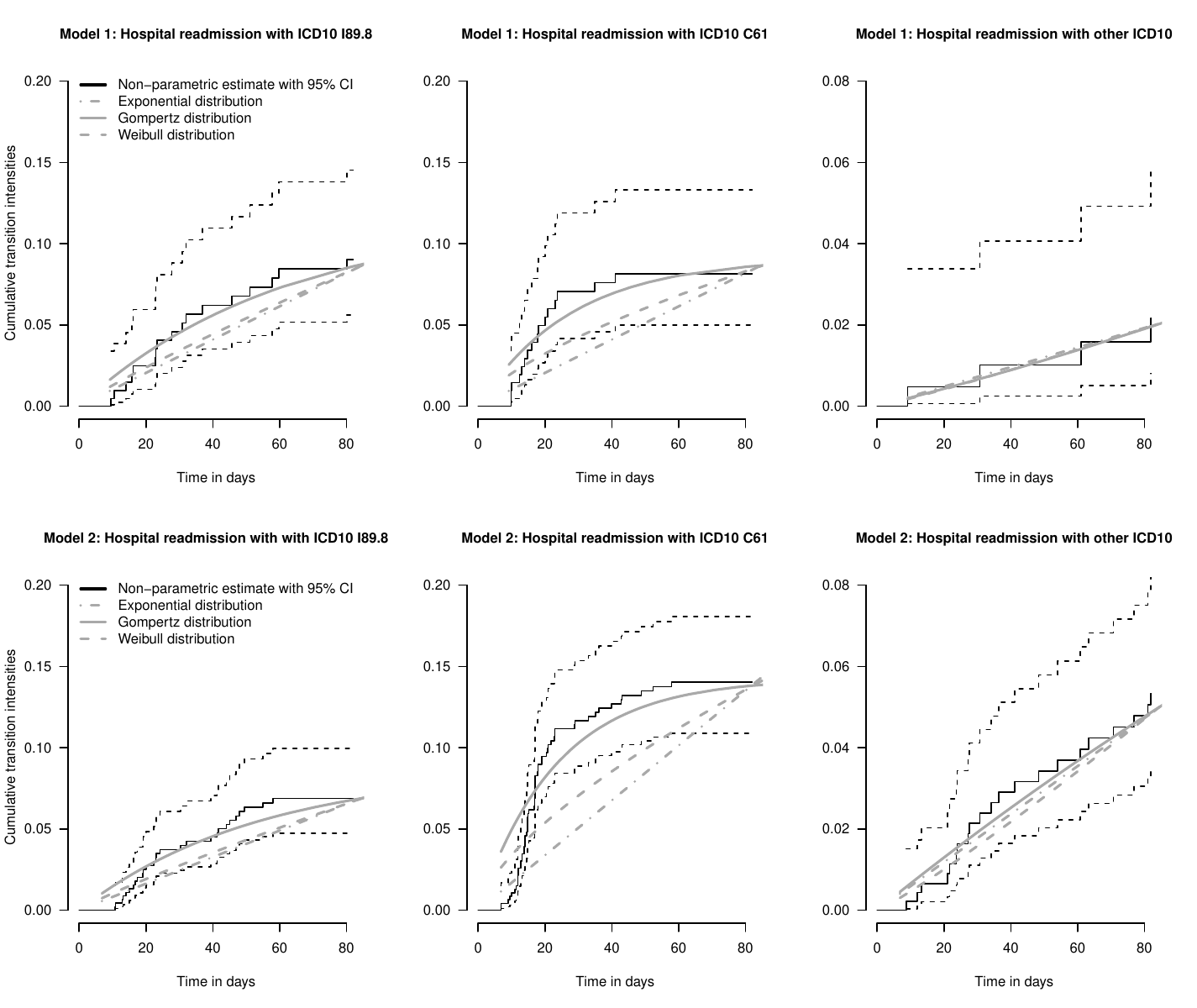}
\caption{Upper row: Estimates of the cumulative transition intensities from competing risks model 1 (upper row), and competing risks model 2 (lower row) in the application example. Illustrated for each panel are the nonparametric Nelson-Aalen estimates (black lines) along with 95\% confidence intervals (CI), as well as parametric model fits from Gompertz distribution (solid gray lines), Weibull distribution (dashed gray lines), and exponential distribution (dashed-dotted gray lines).}
\label{cumIntens}
\end{center}
\end{figure}

Specifically, we consider cases that have undergone open surgery with resection of the prostate including the vesicular glands, also known as open radical prostatectomy (ORP). We retrospectively identified all patients with prostate cancer who underwent ORP at the Department of Urology, University of Freiburg, between January 01, 2015 and February 01, 2021. This resulted in a total of n=695 patients. The current diagnostic standard before such a surgical procedure is a magnetic resonance imaging-based examination with targeted fusion biopsy (FB). In our data, n=213 (31\%) patients received an FB diagnosis prior to ORP, while a larger proportion of patients, n=482 (69\%), did not receive an FB diagnosis in the Department of Urology prior to ORP.

In the healthcare pathway after ORP, in some cases there are hospital readmissions due to competing causes, which can be attributed to the surgery in the period of typically 90 days after surgery. The question now is whether these pathways are similar irrespective of the type of prior diagnosis. Therefore, we distinguish two populations, $\ell = 1, 2$, based on the FB diagnosis obtained prior to surgery and aim to investigate the similarity of subsequent pathways using the two independent competing risk models, as shown in Figure~\ref{uromsm}, where the $\alpha_{0j}^{(\ell)}(t), j = 1, 2, 3, \ell = 1, 2$, describe the transition intensities to the different possible states in the model (see \eqref{intensities}). In the data, the following hospital readmissions occurred over time within 90 days after surgery: Lymphocele (ICD10:I89.9; Model 1: n=17, 8\%; Model 2: n=29, 6\%), malignant neoplasm of the prostate (ICD10:C61, Model 1: n=18, 8\%; Model 2: n=60, 12\%), or ``any other diagnosis'' (Model 1: n=6, 3\%; Model 2: n=31, 6\%). We administratively censor follow-up at 90 days after ORP. 

To understand the dynamics and magnitude of the different risks and to identify a suitable parametric distribution, we estimate the cumulative transition intensities in both models non-parametrically using the Nelson-Aalen estimator \cite{aalen_nonparametric_1978}. In addition, we fit an exponential, Weibull, and Gompertz model to the data. The estimates are shown in Figure~\ref{cumIntens}. For the first and second competing risks states in both models the estimates indicate a clear non-constant accumulated risk, and specifically the Gompertz distribution captures the time dynamics in all cumulative intensities best (as compared to the non-parametric estimates). For the third state a Weibull fit seems to be equally suitable as a fit from the Gompertz model, even the assumption of constant intensities seems to be met. As overall only few events were observed per state, the magnitude of the transitions intensities is low, and correspondingly the uncertainty of estimates relatively high. This is also reflected in the estimates of the parameters of the transitions intensities (see Table~\ref{tab:applparameter}).

%%%%% Figure~\ref{cumIntens}
\small
\begin{table}[]
\small 
\caption{\small Estimates of the parameters $\theta^{(\ell)}_{0j}$ \eqref{def:theta} of potential event time distributions for the three transition intensities from competing risks model 1 and competing risks model 2 in the application example. For Gompertz and Weibull, the first value corresponds to the scale and the second value to the shape parameter (following \eqref{scenario2a} and \eqref{scenario2b}). Numbers in bold are used in the simulation study.}
\begin{tabular}{l|p{1.6cm}p{1.6cm}p{1.6cm}|p{1.6cm}p{1.6cm}p{1.6cm}}
           & \multicolumn{3}{c|}{Model 1}  & \multicolumn{3}{c}{Model 2}      \\
           & $\hat\theta_{01}^{(1)}$ & $\hat\theta_{02}^{(1)}$   & $\hat\theta_{03}^{(1)}$  & $\hat\theta_{01}^{(2)}$ & $\hat\theta_{02}^{(2)}$   & $\hat\theta_{03}^{(2)}$ \\ \hline
 Exponential & \textbf{0.001} & \textbf{0.0011} & \textbf{0.004} & \textbf{0.0008} & \textbf{0.0017} & \textbf{0.0009} \\           
 Gompertz & \textbf{0.002,\newline -0.016} & \textbf{0.003,\newline -0.036} & 0.0002,\newline 0.003 &  \textbf{0.002,\newline -0.018} & \textbf{0.006,\newline -0.043} & 0.0007,\newline -0.003 \\
 Weibull & -0.112, 1304.5 & -0.38, 3098.3 & \textbf{0.097, 2894.8} & -0.12, 1729.8 & -0.404, 1595.9 & \textbf{0.108, 1242.1} \\
\end{tabular}\label{tab:applparameter}
\end{table}\normalsize

For investigating the similarity of the two competing risk models using Algorithm \ref{alg1}, we assume two different settings of event time distributions and various similarity thresholds $\Delta$, ranging from $0.0005$ to $0.0015$. Subsequently, when assuming constant intensities, we will compare the results of this analysis with the results obtained by the individual method \cite{binder2022similarity}.
%Table \ref{tab4} displays the results of the tests for different similarity thresholds $\Delta$. For the method presented in this paper, we assume two different underlying distributions, i.e. an exponential distribution, resulting in constant intensities, and Gompertz distributed event times for states 1 and 2 and Weibull distributed event times for state 3, respectively, according to the simulation scenarios given in Table \ref{tab1}. 
Figure \ref{casestudy} displays the results of the tests in dependence of the similarity threshold $\Delta$. %For the method presented in this paper, we assume two different settings of event time distributions: (a) an exponential distribution, i.e., all transitions intensities are constant (for comparison to results obtained by the individual method), and (b) Gompertz distributed event times for states 1 and 2 and Weibull distributed event times for state 3 (best model fit to the data). 
The p-values for the individual method are obtained by the maximum of the p-values of the three individual tests.
%, as this is the necessary condition to conclude similarity of the competing risk models (see \citet{binder2022similarity}).
Figure \ref{casestudy}(a) directly yields a comparison of the two methods. As expected, the p-values of the test proposed in Algorithm \ref{alg1} are overall similar, but slightly smaller than the ones from the individual method. Consequently, according to the new method, the null hypothesis can be rejected for a threshold of $\Delta=0.0011$, but not using the individual method.
The p-values in Figure \ref{casestudy}(b) correspond to the more realistic setting of fitting  Weibull/Gompertz distributions. We observe that the threshold has to be at least $\Delta=0.005$ such that the null hypothesis can be rejected. Of note, as the difference of the curves lies on another scale as when assuming constant intensities, these results cannot be compared to the p-values displayed in Figure \ref{casestudy}(a).

%\begin{table}[tb]
%	\caption{P-values of the test described in Algorithm \ref{alg1} (new method) compared to the individual method \cite{binder2022similarity} for the application example, considering two different models and different thresholds $\Delta$. Bold values indicate p-values below the nominal level of $\alpha=0.05$.} \label{tab4}
%	\centering
%	\begin{tabular}{c|c|c|c|c|c|c|c}
%		\hline    
%		\multirow{2}{*}{Model} & \multirow{2}{*}{Method} & \multicolumn{6}{c}{Similarity threshold $\Delta$}             \\ %\hline    
%	&	& 0.0005 & 0.0007 & 0.0008 & 0.0010 & 0.0012 & 0.0015\\  \hline  
%		Gompertz/Weibull  &  New method  &       &  &        &     &       &              \\\hline 
%		\multirow{2}{*}{Constant intensities} &  New method  &  0.319  &   0.310     &   0.231      & 0.075       &  \textbf{0.020}     & \textbf{0.002}             \\
%		&  Individual method   &   0.514    & 0.366       &   0.251   & 0.094 &   \bf{0.037}      &   \bf{$<$0.0001}          \\\hline  
%	\end{tabular}
%\end{table}

%hier ggf lieber 2 Grafiken machen (constant und G/W)

\begin{figure}[t]
\begin{center}
(a)\includegraphics[width=0.46\textwidth]{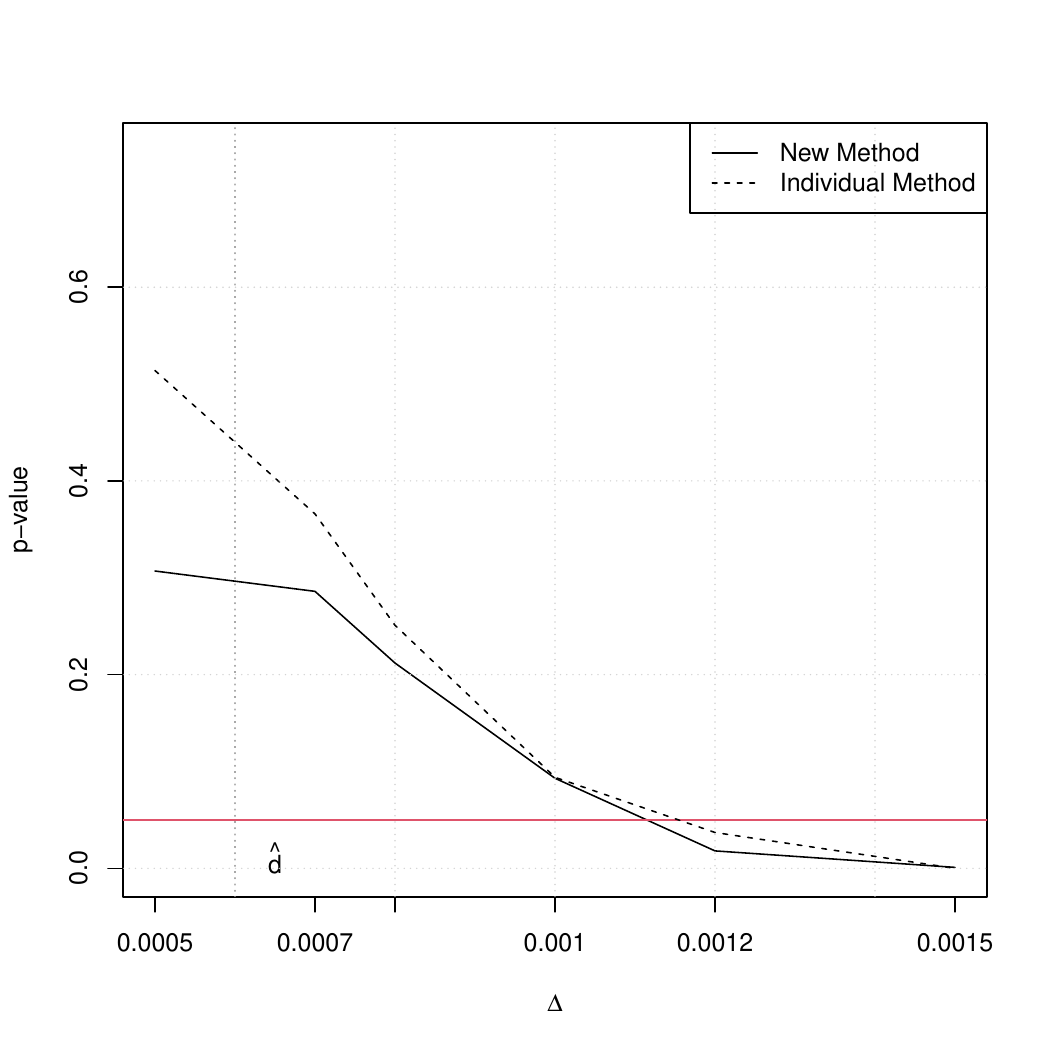}
(b)\includegraphics[width=0.46\textwidth]{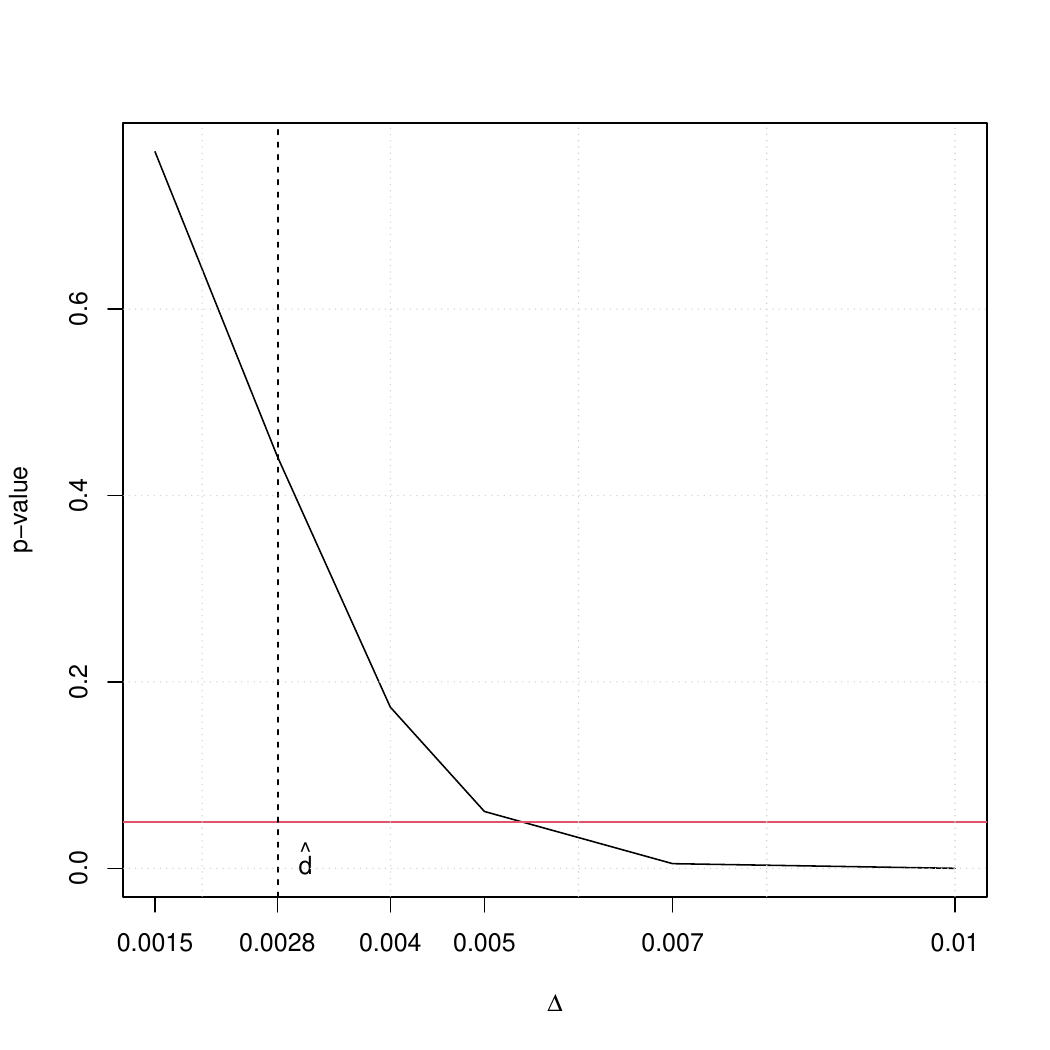}
\caption{(a) P-values of the test described in Algorithm \ref{alg1} (new method, solid line) compared to the individual method \cite{binder2022similarity} (dashed line) for the application example assuming constant intensities, in dependence of the threshold $\Delta$. (b) P-values of the test described in Algorithm \ref{alg1} assuming a Gompertz/Weibull model in dependence of the threshold $\Delta$. \\ The horizontal line indicates a p-value of 0.05, the vertical line indicates the test statistic $\hat d$.}
\label{casestudy}
\end{center}
\end{figure}

%%%%%%%%%%%%%%%%%%%%%%%%%%%%%%%%%%%%%%%%%%

\section{Discussion} \label{disc}

In this work, we have addressed the question of whether two competing risk models can be considered similar. Building on the foundation laid by \cite{binder2022similarity}, we have extended the approach in two innovative ways.
First, we have successfully overcome the previous restriction to constant intensities. Our refined method introduces a framework that can incorporate arbitrary parametric distributions. This advance not only allows for a more nuanced modeling of transition intensities, but also leads to a more robust and effective testing procedure.
Second, we introduced a novel test statistic: the maximum of all maximum distances between transition intensities. This replaces the earlier method of aggregating individual state tests using the IUP. Through comprehensive simulation studies, we demonstrated the superior power of this new procedure.

While our approach introduces a unified similarity threshold $\Delta$, replacing the need for multiple individual thresholds $\Delta_j$, $j=1,\ldots,k$, this does come with a trade-off. The loss of detailed information in individual state comparisons is a consideration, but this is balanced by the increased overall power of the test. For those seeking detailed comparisons, individual tests should still be considered. However, for a broader assessment of the similarity between two competing risk models, our new approach is clearly superior.

An area for future exploration is the challenge of interpreting differences between transition intensities and establishing a meaningful similarity threshold. A potential solution could be to develop a test statistic based on ratios, allowing for more universally applicable thresholds, such as a permissible deviation of $10\%$. This would simplify the process, accommodating ratios within the range of $0.9$ to $1.1$, regardless of the absolute intensity values. We look forward to further research in this direction.

\section*{Funding information}

\noindent The work of N. Binder and H. Dette has been funded in part by the Deutsche Forschungsgemeinschaft (DFG, German Research Foundation) – Project-ID 499552394 – SFB Small Data.

\bibliographystyle{statinmed}
%\bibliography{ref.bib}

\newpage 
\section*{Appendix}

\begin{proof}[Proof of Theorem \ref{thmneu}]
	Recall the definition of the vector of parameters 
	$\theta^{(\ell)}\in \mathbb{R}^p$ in \eqref{hol2} 
	($\ell =1,2$) and define   $\theta  =  \big ((\theta^{(1)})^\top    ( \theta^{(2)})^\top \big )^\top  \in \mathbb{R}^{2p}$   as the vector of all parameters in the two competing risk models. Furthermore denote by  $ \hat \theta^{(\ell)}_{0j}$, $\hat \theta^{(\ell)} $
	and   $\hat\theta  =  \big (\hat\theta^{(1)})^\top    ( \hat\theta^{(2)})^\top \big )^\top $ the corresponding maximum likelihood estimators defined by maximizing \eqref{likelihood} (or equivalently \eqref{loglikelihood}) and
	define by 
	\begin{align} \label{hol1}
		\hat  \alpha_{0j}^{(\ell  )}(t)=\hat \alpha_{0j}^{(\ell  )}(t,  \hat  \theta^{(\ell)}_{0j})
	\end{align}
	the corresponding estimators of the transition intensity functions (note that 
	for $j=1, \ldots , k$, $\ell =1,2$ \eqref{hol1} defines   a $2k$-dimensional vector of functions defined on  the interval $\mathcal{T} = [0,\tau ]$). Then, by 
	Theorem  2  
	in \cite{borgan1984}   it follows that $\sqrt{n} (\hat  \theta -  \theta ) $ converges weakly to a multivariate normal distribution with mean vector $0$ and a block diagonal covariance matrix. 
	We  now interpret  the  vectors as stochastic processes on the finite set $ \mathcal{M} =
	\{1, \ldots  , k \} \times \{1,2 \} $ and rewrite this weak convergence 
	as 
	\begin{align} \label{hol11}
		\big  \{ \sqrt{n} (\hat  \theta^{(\ell)}_{0j}    -  \theta^{(\ell)}_{0j} )  \big  \}_{ (j, \ell) \in \mathcal{M} } 
		\rightsquigarrow  \big  \{ \mathbb{D} (j, \ell)   \big  \}_{(j, \ell) \in \mathcal{M}  }~.
	\end{align}
	Therefore, 
	an application of 
	the continuous mapping theorem \citep[see, for example,][]{vaart1998} 
	%to the mapping
	%   $$
	% \theta \to  \big ( \alpha_{01}^{ (1 )}(t,   \theta^{(1)}_{01}) , \ldots ,
	%  \alpha_{0k}^{ (1 )}(t,   \theta^{(1)}_{0k}), \ldots   ,
	% \alpha_{01}^{ (2 )}(t,   \theta^{(1)}_{01}) , \ldots ,
	%  \alpha_{0k}^{ (1 )}(t,   \theta^{(2)}_{0k}) \big ) ^\top
	%    $$
	%    from $\mathcal{M}$  onto $ \big ( \ell^\infty (\mathcal{T}) 
	%    \big )^{2k}$
	implies the weak convergence of the process
	\begin{align} 
		\label{hol12a}
		&  \big  \{ \sqrt{n} \big  (  (\alpha^{(1)}_{0j} (t,  \hat  \theta^{(1)}_{0j})   -  \alpha^{(1)}_{0j}  (t, \theta^{(1)}_{0j} ) )
		-  (\alpha^{(2)}_{0j} (t,  \hat  \theta^{(2)}_{0j})   -  \alpha^{(2)}_{0j}  (t, \theta^{(2)}_{0j} ) )
		\big
		)  \big  \}_{(j, ,t) \in \mathcal{X} }  \\
		\nonumber  
		& \qquad \qquad \qquad \qquad \qquad \qquad \qquad \qquad \qquad \qquad \qquad \qquad \qquad \qquad 
		\rightsquigarrow  \big  \{ \mathbb{G} (j,t)   \big  \}_{(j, t) \in \mathcal{X} }~
	\end{align}
	in $\ell^\infty (\mathcal{X})$,
	where $\mathcal{X}  = \{ 1 , \ldots ,  k \} \times \mathcal{T}$  and  $\{ \mathbb{G} (j, t)  \}_{(j, t) \in \mathcal{X}} $ is a centered 
	Gaussian process on $\mathcal{X}$. Note that this is the analog of the equation (A.7) in 
	\cite{detmolvolbre2015}, and it follows by similar arguments  as stated in this paper that 
	\begin{align} \label{hol12}
		\sqrt{n} \big ( \|  \hat  \alpha^{(1)} -  \hat  \alpha^{(2)}    \|_{\infty , \infty }    - \|  \alpha^{(1)} -  \hat  \alpha^{(2)}  
		\|_{\infty , \infty }   \big ) \rightarrow \max  \big\{ \max_{(j, t) \in \mathcal{E}^+} \mathbb{G} (j, t)
		, \max_{(j, t)\in \mathcal{E}^-} -\mathbb{G} (j, t) \big \}  ~, 
	\end{align}
	where the vectors $\hat  \alpha^{(\ell )}$ and $\hat  \alpha^{(\ell )}$ are defined by 
	$\hat  \alpha^{(\ell)} (t) = ( \hat \alpha^{(\ell)}_{0j} (t,  \hat  \theta^{(\ell)}_{0j})_{j \in \{ 1, \ldots , k \} }
	$ and 
	$  \alpha^{(\ell)} (t) = (  \alpha^{(\ell)}_{0j} (t,  \hat  \theta^{(\ell)}_{0j})_{j \in \{ 1, \ldots , k \} },
	$ respectively, 
	and
	$$
	\mathcal{E}^{\pm} = \Big \{ (j,t) \in \{ 1, \ldots , k \} \times \mathcal{T} ~:~ 
	\hat\alpha^{(1)}_{0j} (t) - \hat\alpha^{(2)}_{0j} (t)  = \pm 
	\| \hat\alpha^{(1)} - \hat\alpha^{(2)}\|_{\infty,\infty} 
	\Big \} ~.
	$$
	Note that $ \mathcal{E}^-\cup  \mathcal{E}^+ =  \mathcal{E}$, where $ \mathcal{E}$ is defined in \eqref{det1},
	and that 
	\eqref{hol12} is the analog of Theorem 3 in \cite{detmolvolbre2015}. Similarly, we obtain the weak convergence of the bootstrap process and the corresponding 
	statistic, that is 
	\begin{align} 
		\label{hol13}
		&  \big  \{ \sqrt{n} \big  (  (\alpha^{(1)}_{0j} (t,  \hat  \theta^{*(1)}_{0j})   -  \alpha^{(1)}_{0j}  (t, \hat{\hat  \theta}^{(1)}_{0j} ) )
			-  (\alpha^{(2)}_{0j} (t,  \hat  \theta^{*(2)}_{0j})   -  \alpha^{(2)}_{0j}  (t, \hat{\hat \theta}^{(2)}_{0j} ) )
			\big
			)  \big  \}_{(j, ,t) \in \mathcal{X} } 
		\\
		\nonumber  
		& \qquad \qquad \qquad \qquad \qquad \qquad \qquad \qquad \qquad \qquad \qquad \qquad \qquad \qquad 
		\rightsquigarrow  \big  \{ \mathbb{G} (j,t)   \big  \}_{(j, t) \in \mathcal{X} }~
	\end{align}
	and
	\begin{equation*} 
		\sqrt{n} \big ( \|  \hat  \alpha^{*(1)} -  \hat  \alpha^{*(2)}    \|_{\infty , \infty }    - \| \hat  \alpha^{(1)} -  \hat  \alpha^{(2)} 
		\|_{\infty , \infty }   \big ) \rightarrow \max  \big\{ \max_{(j,t) \in \mathcal{E}^+} \mathbb{G} (j,t)
		, \max_{(j,t)\in \mathcal{E}^-} -\mathbb{G} (j,t) \big \}  ~, 
	\end{equation*}
	conditionally on $X^{(1)}_{1} , \ldots , X^{(1)}_{n_1}, X^{(2)}_{1} , \ldots , X^{(2)}_{n_2}  $,
	where $\hat \alpha^{*(\ell)}$ is the bootstrap version  of $\hat \alpha^{(\ell)}$ and  
		%$\hat{\hat \alpha}^{(\ell) } $ is defined similar to  $\hat \alpha ^{(\ell) } $, where 
		%the unconstrained estimators $\hat \theta_{0j}^{(\ell) } $  are  replaced by the constrained ones 
		%$\hat{\hat \theta}_{0j}^{(\ell) } $, i.e. 
		$\hat{\hat \alpha}^{(\ell) } $ is obtained by the constrained estimates $\hat{\hat \theta}_{0j}^{(\ell) } $, i.e. $\hat{\hat{\alpha}}_{0j}^{(\ell  )}(t)=\alpha_{0j}^{(\ell  )}(t, \hat{\hat{\theta}}^{(\ell)}_{0j})$, $j=1,\ldots,k$, $\ell=1,2$, see also Algorithm \ref{alg1}, step (2).
	This  is the analog 
	of statement (A.25) in \cite{detmolvolbre2015}. 
	Now the statements (A.7) and (A.25) and their Theorem 3 are the main ingredients for the proof of Theorem 4 in \citet{detmolvolbre2015}. In the present context these statements can be replaced by  \eqref{hol12a},  \eqref{hol13} and \eqref{hol12}, respectively, and a careful inspection of the arguments given in \citet{detmolvolbre2015} proves the claim of Theorem \ref{thmneu}.
\end{proof}

\end{document}